\documentclass[10pt,twocolumn,journal]{IEEEtran}
\usepackage{latexsym}
\usepackage{amsfonts}
\usepackage{amsbsy}
\usepackage{amsmath,amssymb}
\usepackage{times}
\usepackage{graphicx}
\usepackage{enumerate}
\usepackage[usenames]{color}
\usepackage[dvips]{pstcol}
\usepackage{epstopdf}
\usepackage{subeqnarray}
\usepackage{cases}
\usepackage{stfloats}
\usepackage{cases}
\usepackage{subeqnarray}
\usepackage{amsmath}
\usepackage{multicol}
\usepackage{multirow}
\usepackage{bigstrut}
\usepackage{amsthm}
\usepackage{booktabs}
\usepackage{algorithm}
\usepackage{algorithmic}
\usepackage{bm}
\usepackage{amsthm}
\usepackage{cite}
\input epsf

\title{Two-timescale Beamforming Optimization for Intelligent Reflecting Surface Aided Multiuser Communication with QoS Constraints} 

\author{Ming-Min Zhao, \IEEEmembership{Member,~IEEE,} An Liu, \IEEEmembership{Senior Member,~IEEE,} Yubo Wan, and Rui Zhang, \IEEEmembership{Fellow,~IEEE}
	\thanks{
		M. M. Zhao, A. Liu and Y. Wan are with the College of Information Science and Electronic Engineering, Zhejiang University, China (email: \{zmmblack, anliu\}@zju.edu.cn, wan.scu@qq.com). R. Zhang is with the Department of Electrical and Computer Engineering, National University of Singapore, Singapore (email: elezhang@nus.edu.sg).
	}
}

\begin{document}
			\maketitle
	\begin{abstract}
Intelligent reflecting surface (IRS) is an emerging technology that is able to reconfigure the wireless channel via tunable passive signal reflection and thereby enhance the spectral/energy efficiency of wireless networks cost-effectively. In this paper, we study an IRS-aided multiuser multiple-input single-output (MISO) wireless system and adopt the two-timescale (TTS) transmission to reduce the signal processing complexity and channel training overhead as compared to the existing schemes based on the  instantaneous channel state information (I-CSI), and at the same time, exploit the multiuser channel diversity in transmission scheduling. Specifically, the \emph{long-term} passive beamforming (i.e., IRS phase shifts) is designed based on the statistical CSI (S-CSI) of all links, while the \emph{short-term} active beamforming (i.e., transmit precoding vectors at the access point (AP)) is designed to cater to the I-CSI of all users’ reconfigured channels with optimized IRS phase shifts. We aim to minimize the \emph{average transmit power} at the AP, subject to the users’ individual quality of service (QoS) constraints on the \emph{achievable long-term average rate}. The formulated stochastic optimization problem is non-convex and difficult to solve since the long-term and short-term design variables are complicatedly coupled in the QoS constraints. To tackle this problem, we propose an efficient algorithm, called the primal-dual decomposition based TTS joint active and passive beamforming (PDD-TJAPB), where the original problem is decomposed into a long-term passive beamforming problem and a family of short-term active beamforming problems, and the deep unfolding technique is employed to extract gradient information from the short-term problems to construct a convex surrogate problem for the long-term problem. We show that both the long-term and short-term problems can be efficiently solved and the proposed algorithm is proved to converge to a stationary solution of the original problem almost surely. Simulation results are presented which demonstrate the advantages and effectiveness of the proposed algorithm as compared to benchmark schemes.
		
	\end{abstract} 
	\begin{IEEEkeywords}
		Intelligent reflecting surface, statistical CSI, two-timescale optimization, phase-shift optimization, primal-dual decomposition, deep unfolding, multiuser diversity.
	\end{IEEEkeywords}
	
	\section{Introduction} 
	Intelligent reflecting surface (IRS) has been regarded as a promising new technology to achieve high spectral efficiency with low hardware and energy cost for wireless communication  systems \cite{Wu2019Magazine, Basar2019, Huang2019, WuTutorial2020}. IRS is a man-made planar metasurface that is composed of a large number of passive reflecting elements, each of which is able to induce certain amplitude and/or phase changes to the impinging signal, thus collaboratively reconfiguring the wireless channel for enhanced communication. Compared with existing  wireless techniques such as massive multiple-input multiple-output (MIMO), ultra-dense network (UDN) and millimeter wave (mmWave) communication \cite{Boccardi2014}, IRS is more cost-effective since it can reduce the network deployment cost and energy consumption due to the ever-increasing number of active antennas/radio-frequency (RF) chains deployed in them. Thanks to its passive signal reflection, IRS can be densely deployed in wireless networks with a low and scalable cost as well as easily controllable interference. In addition, different from the traditional active relays, IRS can work in an inherent full-duplex manner without suffering any self-interference and additive noise.
	 
	 Due to the above appealing advantages, IRS has attracted significantly growing attention recently and it has been shown that IRS can enhance substantially the performance of various wireless systems (see, e.g., \cite{Yang2019, Cui2019, Jiang2019, wu2019joint, ZhangMIMO, lu2020enabling, zuo2020resource, Ding2020_IRSNOMA, PanJSAC2020, Pan2020TWC, Atapattu2020Twoway, PradhanWCL2020}). However, in order to reap the performance gains brought by IRS in practical systems, new challenges also arise. First, to fully exploit the new channel reconfiguration ability of IRS, the reflection coefficients at the IRS need to be adaptively designed based on the channel state information (CSI) of the IRS-reflected channels to/from both its associated access point (AP) and users. However, the acquisition of such CSI is a practically difficult task since IRS is a passive device without the ability to transmit pilot signals for channel estimation. In addition, the amount of channel coefficients in the IRS-reflected channels that need to be estimated increases rapidly with the number of IRS reflecting elements, especially when the AP and/or users are equipped with multiple antennas. To overcome this challenge, the authors in \cite{Hu2018TSP} proposed to deploy dedicated sensors at the IRS to estimate the IRS channels from the AP/users; however, this will increase the implementation cost of IRS and the estimated channels are also different from the desired IRS-reflected channels in general. Alternatively, estimating the AP-IRS-user cascaded channels based on the training/pilot signals sent by the AP/users can be a practical solution that also does not incur any performance loss as compared to the ideal case of knowing the individual AP-IRS and IRS-user channels \cite{Mishra2019ICASSP, zheng2019intelligent, you2019progressive}. Moreover, in \cite{wang2019channel, zheng2020intelligent}, the authors studied the channel estimation problem in IRS-aided multiuser systems and proposed new methods to reduce the channel training overhead by exploiting the fact that each IRS element reflects the signals from different users to the AP via the same IRS-AP channel. Besides, the low-rank and sparsity properties of IRS channels were exploited in \cite{He2019_CE, chen2019channel} for reducing the channel estimation complexity and overhead.
	 
Second, how to optimize the IRS reflection coefficients (including both reflection amplitudes and phase shifts in general) to unveil the full benefit offered by IRS is another challenging problem, due to their involved difficult-to-handle uni-modular constraints and intricate coupling in passive beamforming optimization. In the literature, various algorithms have been proposed to optimize the IRS reflection phase shifts (continuous or discrete), based on the instantaneous CSI (I-CSI) \cite{Yang2019, Cui2019, Jiang2019, wu2019joint, ZhangMIMO, lu2020enabling, zuo2020resource, Wu2018_journal, Wu2019Discrete, Guo2020TWC, Di2020JSAC}. In \cite{Zhao2020intelligent}, the authors further investigated the effect of IRS amplitude control on the passive beamforming performance, and showed that under imperfect CSI, additional performance gains can be achieved by jointly controlling the reflection amplitudes as compared to the case with full reflection and phase-shift control only.
However, the above works rely on the I-CSI, which requires high signal processing complexity and large channel training/signaling overhead for CSI acquisition. To tackle this difficulty, a two-timescale (TTS) transmission protocol was proposed in \cite{zhao2019intelligent} where the passive beamforming (IRS phase shifts) is optimized based on the statistical CSI (S-CSI) of all links, while  the active beamforming (transmit precoding vectors at the AP) is designed to cater to the user's reconfigured instantaneous channels with fixed (optimized) IRS passive beamforming. However, only the users' sum-rate maximization was considered in \cite{zhao2019intelligent} and the algorithms therein cannot be applied to the practical case when the users have individual rate or quality of service (QoS) requirements, which thus motivates this work.

In this paper, we study the TTS beamforming optimization problem in an IRS-aided multiuser system under the users’ individual rate/QoS constraints. The short-term active beamforming vectors at the AP and the long-term passive IRS reflection phase shifts are jointly designed to minimize the \emph{average transmit power} at the AP under the \emph{achievable long-term average rate constraints} for the users. Furthermore, we exploit the multiuser channel diversity to reduce the average transmit power of the AP, by properly designing the active beamforming vectors in different time slots according to the reconfigured I-CSI and performing ``time sharing'' based user scheduling to guarantee the long-term average rates of the users.

The formulated problem is a non-convex stochastic optimization problem and thus difficult to solve, with complicatedly coupled long-term and short-term design variables in the QoS constraints. To the best of our knowledge, there still lack efficient methods for solving such a TTS non-convex stochastic optimization problem. To overcome this challenge, we introduce long-term Lagrange multipliers to the QoS constraints and present a novel TTS primal-dual decomposition (PDD) method to decouple the long-term and short-term variables, which results in a long-term passive beamforming problem and a family of short-term active beamforming problems. Then, together with the constrained stochastic successive convex approximation (CSSCA) framework \cite{Liu_CSSCA_2019}, we propose a PDD-based TTS joint active and passive beamforming (PDD-TJAPB) algorithm to solve the formulated problem efficiently. The proposed algorithm contains two sub-algorithms, i.e., a long-term passive beamforming sub-algorithm and a short-term active beamforming sub-algorithm, to solve the long-term problem and each short-term problem, respectively. In the long-term sub-algorithm, a convex surrogate problem of the long-term problem is constructed based on randomly generated channel samples according to the S-CSI  and the gradient information of the short-term solutions with respect to (w.r.t.) the long-term variables. The resulting  problem is convex and thus can be efficiently solved using off-the-shelf solvers, such as CVX \cite{CVX}. In the short-term sub-algorithm, the weighted sum mean-square error minimization (WMMSE) method is employed to optimize the active beamforming vectors at the AP and a WMMSE network is built to extract the gradient information w.r.t. the long-term variables by unfolding the iterative procedure of WMMSE into a data flow graph and utilizing the widely-used back-propagation method. We prove that the solution obtained by the proposed PDD-TJAPB algorithm is a stationary point of the original problem almost surely. Finally, simulation results are presented to validate the effectiveness of the proposed algorithm.

The new contributions of this paper in view of the existing literature are summarized as follows:

1) A new TTS joint active and passive beamforming problem formulation for an IRS-aided multiuser MISO system is proposed to ensure the long-term QoS/rate constraint of each user and meanwhile exploit the multiuser channel diversity to reduce the average transmit power of the AP.

2) By combining a novel TTS primal-dual decomposition method, the CSSCA framework and the deep unfolding technique, we propose a new PDD-TJAPB algorithm to tackle the formulated problem, which cannot be solved by existing algorithms. We prove that every limit point of the proposed algorithm almost surely satisfies the stationary conditions of the formulated problem. The proposed algorithm provides an efficient solution to the open problem of TTS stochastic optimization with coupled long-term and short-term constraints.

3) Extensive simulation results are presented to validate the effectiveness of the proposed algorithm and we show that it can achieve considerable performance gain under various practical setups as compared to the conventional I-CSI-based scheme, with lower signal processing complexity and channel estimation/training overhead.

The rest of this paper is organized as follows. Section \ref{section_system_model} introduces the system model and problem formulation. In Section \ref{section_algorithm}, we propose an efficient PDD-TJAPB algorithm to solve the formulated problem and prove its convergence. Section \ref{section_simulation_results} presents numerical results to evaluate the performance of the
proposed algorithm and finally Section \ref{section_conclusion} concludes the paper.

\emph{Notations}: Scalars, vectors and matrices are respectively denoted by lower/upper case, boldface lower case and boldface upper case letters. For an arbitrary matrix $\mathbf{A}$, $\mathbf{A}^T$ and $\mathbf{A}^{H}$ denote its transpose and conjugate transpose, respectively. For a square matrix $\mathbf{A}$, $\mathbf{A}^{-1}$ denotes its inverse matrix if it is invertible. $[\mathbf{a}]_n$ denotes the $n$-th element of a vector $\mathbf{a}$, and $\mathbf{a} \succeq \mathbf{b}$ denotes the component-wise inequality between vectors $\mathbf{a}$ and $\mathbf{b}$. $\|\cdot\|$ denotes the Euclidean norm of a complex vector and $|\cdot|$ denotes the absolute value of a complex scalar. $\mathcal{CN}(\mathbf{x},\bm{\Sigma})$ denotes the distribution of a circularly symmetric complex Gaussian (CSCG) random vector with mean vector $\mathbf{x}$ and covariance matrix $\bm{\Sigma}$; and $\sim$ stands for ``distributed as''. For any given $x_1,\cdots,x_N$, $\textrm{diag}(x_1,\cdots,x_N)$ denotes a diagonal matrix with $x_1,\cdots,x_N$ as its diagonal elements. The letter $\jmath$ will be used to represent $\sqrt{-1}$. $\mathbb{C}^{n\times m}$ ($\mathbb{R}^{n\times m}$) denotes the space of $n\times m$ complex (real) matrices and $\mathbb{R}^{n++}$ represents the space of $n\times 1$ vectors with strictly positive real elements. $\mathcal{F}^N$ is defined as the Cartesian product of $N$ identical sets each given by $\mathcal{F}$. $\textrm{relint}(\mathcal{S})$ denotes the relative interior of a set $\mathcal{S}$, defined as its interior within the affine hull of $\mathcal{S}$. For any nonempty convex set $\mathcal{S}$, $\textrm{relint}(\mathcal{S})\triangleq \{x\in\mathcal{S}: \forall y \in \mathcal{S}, \exists \lambda>1 : \lambda x + (1-\lambda)y \in \mathcal{S} \}$. $\mathbf{I}$, $\mathbf{1}$ and $\mathbf{0}$ denote an identity matrix, an all-one matrix/vector and an all-zero matrix with appropriate dimensions, respectively. $\mathbb{E}\{\cdot\}$ represents the statistical expectation. For given two sets $\mathcal{A}$ and $\mathcal{B}$, $\mathcal{A} \backslash \mathcal{B}\triangleq \{x|x \in \mathcal{A}, x \notin \mathcal{B} \}$.

\section{System Model and Problem Formulation} \label{section_system_model}
\subsection{System Model}
As shown in Fig. \ref{system_model}, we consider an IRS-aided multiuser MISO downlink system, where an IRS composed of $N$ reflecting elements is deployed to assist the communications from an AP with $M$ antennas to a set of $K$ single-antenna users (denoted by $\mathcal{K}\triangleq\{1,\cdots, K\}$). In the considered system, the IRS is connected to a smart controller that is able to communicate with the AP via a separate wireless link for coordinating their transmission and exchanging information, such as CSI and IRS reflection coefficients \cite{Wu2019Magazine, Zhang2018Metasurface}. Since the signal transmitted through the AP-IRS-user link experiences severe ``distance-product'' power loss \cite{Wu2019Magazine}, we only consider the signal reflected by the IRS for the first time and those reflected by it two or more times are ignored.

Let $\mathbf{G} \in \mathbb{C}^{N\times M}$ denote the baseband equivalent channel matrix of the AP-IRS link, $\mathbf{h}_{r,k} \in \mathbb{C}^{N\times 1}$ and $\mathbf{h}_{d,k} \in \mathbb{C}^{M\times 1}$ denote the (conjugate) channel vectors of the IRS-user $k$ and AP-user $k$ links, respectively. Then, the received signal of user $k$ can be expressed as
\begin{equation}
y_k = (\mathbf{h}_{r,k}^H \mathbf{\bm{\Theta}} \mathbf{G}+\mathbf{h}_{d,k}^H) \mathbf{x} + n_k,
\end{equation}
where $\bm{\Theta}$ denotes an $N \times N$ diagonal reflection coefficient matrix (also known as the passive beamforming matrix), which can be written as $\bm{\Theta} = \textrm{diag}(\phi_1,\cdots,\phi_N )$, $\phi_n = a_n e^{\jmath \theta_n}$ with $a_n \in [0,1] $ and $\theta_n \in \mathcal{F} \triangleq [0,2\pi), \forall n \in \mathcal{N} \triangleq\{1,\cdots,N\}$, denoting the reflection amplitudes and phase shifts, respectively. 
For simplicity, we assume $a_n = 1$, $\forall n \in \mathcal{N}$ to maximize the signal reflection \cite{zhao2019intelligent}. $n_k$ denotes the independent and identically distributed (i.i.d.) additive white Gaussian noise (AWGN) at the receiver of user $k$ with zero-mean and variance $\sigma_k^2$. $\mathbf{x}$ denotes the complex baseband transmitted signal at the AP, which can be written as
\begin{equation}
\mathbf{x} = \sum\limits_{k\in\mathcal{K}}\mathbf{w}_k s_k,
\end{equation}
where $s_k \sim \mathcal{CN}(0,1)$ represents the information-bearing symbol for user $k$ with $s_k$'s assumed to be i.i.d., and $\mathbf{w}_k \in \mathbb{C}^{M \times 1}$ denotes the active beamforming/precoding vector intended for user $k$. 

\begin{figure}[t] 
	\setlength{\abovecaptionskip}{-0.2cm} 
	\setlength{\belowcaptionskip}{-0.2cm} 
	\centering
	\scalebox{0.4}{\includegraphics{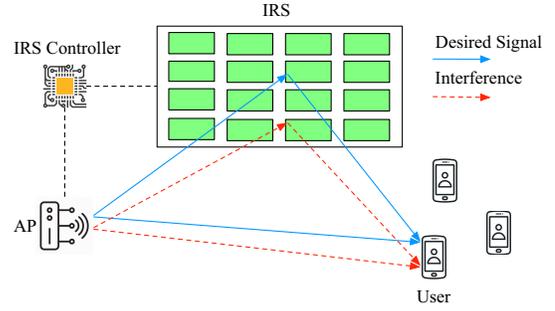}} 
	\caption{An IRS-aided multiuser MISO downlink system.}
	\label{system_model}
\end{figure}

Then, the received signal-to-interference-plus-noise ratio (SINR) and achievable rate of user $k$ in bits/second/Hertz (bits/s/Hz) can be respectively written as
\begin{equation} \label{SINR}
\textrm{SINR}_k(\bm{\Theta},\{\mathbf{w}_k\},\tilde{\mathbf{H}})\!=\! \frac{|(\mathbf{h}_{r,k}^H \bm{\Theta} \mathbf{G}\!+\! \mathbf{h}_{d,k}^H) \mathbf{w}_k|^2}{\sum\limits_{j \in \mathcal{K}\backslash\{k\}}|(\mathbf{h}_{r,k}^H \bm{\Theta} \mathbf{G}\!+\! \mathbf{h}_{d,k}^H) \mathbf{w}_j|^2 \!+\! \sigma_k^2},
\end{equation}
\begin{equation} \label{I_rate}
r_k(\bm{\Theta},\{\mathbf{w}_k\},\tilde{\mathbf{H}}) = \log(1+\textrm{SINR}_k(\bm{\Theta},\{\mathbf{w}_k\},\tilde{\mathbf{H}})),
\end{equation}
 where $ \tilde{\mathbf{H}} \triangleq \{\mathbf{G}, \mathbf{h}_{r,k},\mathbf{h}_{d,k}\}$ denotes the full channel ensemble. Since optimizing $\bm{\Theta}$ is equivalent to optimizing the phase of each of its diagonal elements, i.e., the phase shift of each reflecting element, by letting $\bm{\theta} = [\theta_1,\cdots, \theta_N]^T$, the SINR and achievable rate of user $k$ can be equivalently expressed as $\textrm{SINR}_k(\bm{\theta},\{\mathbf{w}_k\},\tilde{\mathbf{H}})$ and $r_k(\bm{\theta},\{\mathbf{w}_k\},\tilde{\mathbf{H}})$, without loss of optimality.\footnote{Note that with this variable substitution, the non-convex uni-modular constraints on the reflection coefficients $|\phi_n|=1,\; \forall n \in \mathcal{N}$ are transformed to $\theta_n \in [0,2\pi) ,\; \forall n \in \mathcal{N}$, which are convex and easier to handle.}

\subsection{Problem Formulation}
In this paper, we aim to minimize the average transmit power at the AP by jointly optimizing the active beamforming vectors at the AP and passive beamforming (reflection phase shifts) at the IRS, subject to the constraints on the achievable average rates for all users.
In order to reduce the signal processing complexity and training/signaling overhead of the joint active and passive beamforming design based on the I-CSI of all links, we adopt the TTS transmission protocol proposed in \cite{zhao2019intelligent}. Specifically, we focus on a sufficiently long time interval within which the statistics (distributions) of all channels are assumed to be constant as we mainly consider low-mobility users.\footnote{How to efficiently estimate the CSI (either instantaneously or statistically) and design the active and passive beamforming vectors in the high-mobility scenario are still open problems in this field.} The considered time interval consists of $T_s \gg 1$ time slots and all channels are assumed to remain approximately constant within each time slot, while that may change over different time slots based on the assumed constant channel distributions. We further assume that the AP can obtain the effective I-CSI $\{ \tilde{\mathbf{h}}_k \triangleq  \mathbf{G}^H \bm{\Theta}^H  \mathbf{h}_{r,k}+ \mathbf{h}_{d,k} \} $ at each time slot with given $\bm{\Theta}$ by conventional channel estimation techniques (e.g., estimating the reciprocal uplink channels based on the uses' training signals), and the S-CSI of the AP-IRS, AP-user and IRS-user links is known at the AP at the beginning of each time interval \cite{zhao2019intelligent}. Therefore, we aim to optimize the passive IRS phase-shift matrix $\bm{\Theta}$ based on the S-CSI of all links, while the active beamforming vectors $\{\mathbf{w}_k \}$ at the AP are designed to cater to the I-CSI of the users’ effective fading channels $\{ \tilde{\mathbf{h}}_k\}$ with the optimized IRS phase shifts that are fixed over all the time slots. 

\newtheorem{remark}{Remark}
\begin{remark}
\emph{For completeness, we briefly introduce the S-CSI estimation approach in \cite{zhao2019intelligent}, where the S-CSI between the IRS and the AP/users are estimated by resorting to some dedicated sensors/receiving circuits at the IRS and leveraging the pilots and/or data transmitted in both uplink and downlink using standard mean and covariance matrices estimation techniques. Specifically, we can obtain $T_p$ channel samples, i.e., $\{\mathbf{h}_{d,k}^i\}$, $\{\mathbf{G}^i\}$ and $\{\mathbf{h}_{r,k}^i\}$, $i=1,\cdots,T_p$, through conventional pilot-based channel estimation methods.  Then, we are able to estimate the S-CSI of the AP-user, AP-IRS, and IRS-user links based on these channel samples by utilizing the methods in \cite{Mestre2008}, \cite{Werner2008}. Typically, the number of required channel samples $T_p$ (directly proportional to the required pilot overhead) is comparable to the dimensions of the channels, however for some special cases, e.g., for sparse or low-rank channels, $T_p$ can be much smaller \cite{Liu2015TWC}. On the other hand, for fully-passive IRSs without any sensors/receiving circuits, how to effectively estimate the S-CSI of the cascaded AP-IRS-user channels is an important problem that is not addressed yet and remains open in the IRS literature.}
\end{remark}

Note that although the IRS phase shifts can be adjusted in almost real time (e.g., for cascaded channel estimation), they are fixed in the data transmission phase of the adopted TTS transmission protocol, and vary slightly in consecutive time intervals since the long-term S-CSI varies slowly. Besides, the number of reflecting elements at the IRS, $N$, is typically much larger than that of transmit antennas at the AP, thus the effective CSI $\{ \tilde{\mathbf{h}}_k\}$ usually has a much smaller dimension than $ \tilde{\mathbf{H}} $ and estimating $\{ \tilde{\mathbf{h}}_k\}$ requires less training/pilot symbols as compared to estimating $ \tilde{\mathbf{H}} $ (or the AP-IRS-user cascaded channels $\{\textrm{diag}(\mathbf{h}_{r,k}^H) \mathbf{G}\}$ and AP-user direct channels $\{\mathbf{h}_{d,k}\}$ equivalently). Specifically, the number of channel coefficients required in each time slot can be reduced from $NM+NK+MK$ to $MK$ by employing the proposed TTS transmission protocol and the number of IRS reflection coefficients that are needed to be transmitted to the IRS through the separate wireless link in one time interval can also be reduced from $T_sN$ to $N$.

Consequently, we formulate the following QoS-constrained TTS optimization problem:
\begin{subequations} \label{TTS_problem_detail_equi}
	\begin{align}
	\min\limits_{\bm{\theta},\; \{\mathbf{w}_k(\tilde{\mathbf{h}}),\forall \tilde{\mathbf{h}}\}} \; &  \mathbb{E} \left\{\sum\limits_{k\in \mathcal{K}} \|\mathbf{w}_k (\tilde{\mathbf{h}})\|^2 \right\}\\
	\textrm{s.t.} \; &
	\mathbb{E} \{  r_k(\bm{\theta},\{\mathbf{w}_k(\tilde{\mathbf{h}})\},\tilde{\mathbf{H}} ) \} \geq R_k, \;\forall k \in \mathcal{K}, \label{long_term_constraints}\\
	& \theta_n \in \mathcal{F} ,\; \forall n \in \mathcal{N}.
	\end{align}
\end{subequations}
where the full channel ensemble $\tilde{\mathbf{H}} $ can be viewed as a random system state defined on the probability space $(\Omega, \mathcal{G},\mathbb{P})$, with $\Omega$
being the sample space, $\mathcal{G}$ being the $\sigma$-algebra generated by subsets of $\Omega$, and $\mathbb{P}$ being a probability measure defined on $\mathcal{G}$; {the notation $\{\mathbf{w}_k(\tilde{\mathbf{h}})\}$, where $\tilde{\mathbf{h}} \triangleq \{\tilde{\mathbf{h}}_k \}$, denotes the set of short-term active beamforming vectors under the effective CSI $\tilde{\mathbf{h}}$ (or equivalently, the system state $\tilde{\mathbf{H}}$) at each time slot}; $R_k > 0$ is the minimum average rate target of user $k$ and the expectations in problem \eqref{TTS_problem_detail_equi} are taken over the distribution of $\{\tilde{\mathbf{H}}\}$.

Problem \eqref{TTS_problem_detail_equi} is challenging to solve because: 1) the short-term beamforming vectors $\{\mathbf{w}_k \}$ and the long-term IRS phase shifts $\bm{\theta}$ are intricately coupled in a complicated manner via the long-term QoS constraints in \eqref{long_term_constraints}; 2) a closed-form expression of the achievable average rate of each user, $\mathbb{E} \{  r_k(\bm{\theta},\{\mathbf{w}_k(\tilde{\mathbf{h}})\},\tilde{\mathbf{H}}) \}$, is difficult to obtain. Generally, there is no efficient method for solving the non-convex stochastic optimization problem \eqref{TTS_problem_detail_equi} optimally. In the next section, we propose an efficient algorithm, called PDD-TJAPB, to solve it sub-optimally and with the guaranteed convergence to a stationary solution to problem \eqref{TTS_problem_detail_equi}, which is defined as follows.
\newtheorem{definition}{Definition} 
\begin{definition}[Stationary solution of problem \eqref{TTS_problem_detail_equi}]
\emph{A solution $(\bm{\theta}^* \in \mathcal{F}^N, \bm{\varUpsilon}^* = \{ \mathbf{w}^*(\tilde{\mathbf{h}}) \in \mathbb{C}^{MK\times 1}, \forall \tilde{\mathbf{H}}  \})$ where $\mathbf{w}^*(\tilde{\mathbf{h}}) \triangleq [\mathbf{w}_1^*(\tilde{\mathbf{h}})^T,\cdots, \mathbf{w}_K^*(\tilde{\mathbf{h}})^T]^T$ is called a stationary solution of problem \eqref{TTS_problem_detail_equi}, if there exist long-term Lagrange multipliers $\bm{\lambda}=[\lambda_1,\cdots, \lambda_K]^T \succeq \mathbf{0}$ associated with the achievable  average rate constraints \eqref{long_term_constraints}, such that the following conditions are satisfied:}

\emph{
1) For every $\tilde{\mathbf{H}} \in \Omega$, we
have
\begin{equation} \label{stationary_1}
\begin{aligned}
\partial_{\mathbf{w}} \Bigg(\sum\limits_{k\in \mathcal{K}} & \|\mathbf{w}^*(\tilde{\mathbf{h}})\|^2 \Bigg) \\
 & + \sum\limits_{k \in \mathcal{K}}
\lambda_k \partial_{\mathbf{w}}(R_k - r_k(\bm{\theta}^*,\mathbf{w}^*(\tilde{\mathbf{h}}),\tilde{\mathbf{H}} ) ) = \mathbf{0},
\end{aligned}
\end{equation}
where $\partial_{\mathbf{w}} (\sum\nolimits_{k\in \mathcal{K}}  \|\mathbf{w}^*(\tilde{\mathbf{h}})\|^2)$ and $\partial_{\mathbf{w}}(R_k - r_k(\bm{\theta}^*,\mathbf{w}^*(\tilde{\mathbf{h}}),\tilde{\mathbf{H}} ) )$ are the partial derivatives of the functions $\sum\nolimits_{k\in \mathcal{K}}  \|\mathbf{w}(\tilde{\mathbf{h}})\|^2$ and $ R_k - r_k(\bm{\theta}^*,\mathbf{w}(\tilde{\mathbf{h}}),\tilde{\mathbf{H}} )  $ w.r.t. $\mathbf{w}$ at $\mathbf{w} = \mathbf{w}^*(\tilde{\mathbf{h}})$.}

\emph{
2)
\begin{equation} \label{stationary_2}
\begin{aligned}
 \sum\limits_{k \in \mathcal{K}} \lambda_k\partial_{\bm{\theta}}(R_k - \mathbb{E}\{ r_k(\bm{\theta}^*,\mathbf{w}^*(\tilde{\mathbf{h}}),\tilde{\mathbf{H}} ) \}) & = \mathbf{0},\\
 R_k - \mathbb{E}\{ r_k(\bm{\theta}^*,\mathbf{w}^*(\tilde{\mathbf{h}}),\tilde{\mathbf{H}} ) \}  & \leq 0, \; \forall k \in \mathcal{K},
\end{aligned}
\end{equation}
where $\partial_{\bm{\theta}}(R_k - \mathbb{E}\{ r_k(\bm{\theta}^*,\mathbf{w}^*(\tilde{\mathbf{h}}),\tilde{\mathbf{H}} ) \})$ is the partial derivative of $ R_k - \mathbb{E}\{ r_k(\bm{\theta},\mathbf{w}^*(\tilde{\mathbf{h}}),\tilde{\mathbf{H}} ) \} $ w.r.t. $\bm{\theta}$ at $\bm{\theta} = \bm{\theta}^*$.}

\emph{
3) 
\begin{equation} \label{stationary_3}
\lambda_k (R_k - \mathbb{E}\{ r_k(\bm{\theta}^*,\mathbf{w}^*(\tilde{\mathbf{h}}),\tilde{\mathbf{H}} ) \}) = 0, \; \forall k \in \mathcal{K}.
\end{equation}
}
\end{definition}

In other words, a stationary solution satisfies all the Karush–Kuhn–Tucker (KKT) conditions of problem \eqref{TTS_problem_detail_equi}. Note that a stationary solution is not necessarily a global optimal or local optimal solution, however, obtaining a stationary solution is perhaps the best we can do for problem \eqref{TTS_problem_detail_equi} with an acceptable computational complexity, since it is a non-convex stochastic optimization problem with coupled constraints.

Note that in \cite{Wu2018_journal}, a similar QoS-constrained power minimization problem for an IRS-aided multiuser system was investigated, where $\{\mathbf{w}_k \}$ and $\bm{\theta}$ are designed based on the I-CSI. This work is different from \cite{Wu2018_journal} since we adopt a TTS beamforming design and the minimum \emph{achievable average rate} of each user is considered as the constraint, instead of the \emph{instantaneous achievable rate} in \cite{Wu2018_journal}. The long-term IRS phase shifts in $\bm{\theta}$ are designed based on the S-CSI while the short-term active beamforming vectors are also designed in a different way. Moreover, due to the average rate constraints for the users, their instantaneous achievable rates are not required to be larger than or equal to $R_k$ in each time slot; therefore, the multiuser channel diversity can be exploited to achieve efficient variable-rate scheduling over different time slots for the users in our design, which is intuitively illustrated in Fig. \ref{multiuser_channel_diversity}. As can be seen, in the TTS transmission protocol, the instantaneous achievable rates of the users can be varying according to their channel conditions in each time slot, thus the transmit power of the AP can be more efficiently allocated over time to save energy. Simulation results will be provided to further illustrate this gain in Section \ref{section_simulation_results}. On the other hand, although we consider the same TTS transmission protocol in this paper as that in \cite{zhao2019intelligent}, the investigated optimization problems are different since we consider long-term QoS constraints in this paper, while in \cite{zhao2019intelligent} the users' sum-rate maximization was considered. Note that in general, the formulation in \cite{zhao2019intelligent} cannot guarantee the average achievable rate for each user. {Such QoS requirements are quite common in practical communication scenarios as different users can have very diverse demands for wireless data services, such as augmented reality (AR)/virtual reality (VR), video streaming and text messaging, etc.} Moreover, problem \eqref{TTS_problem_detail_equi} is more difficult to solve due to the newly considered users' long-term rate constraints and the algorithms in \cite{zhao2019intelligent} are no longer applicable.

\begin{remark}
\emph{It would be more practical to further include the average and peak power constraints in problem \eqref{TTS_problem_detail_equi} \cite{Khojastepour2004}, i.e., $\mathbb{E} \{\sum\nolimits_{k\in \mathcal{K}} \|\mathbf{w}_k (\tilde{\mathbf{h}})\|^2 \} \leq P_{\textrm{ave}}$ and $\sum\nolimits_{k\in \mathcal{K}} \|\mathbf{w}_k (\tilde{\mathbf{h}})\|^2 \leq P_{\textrm{peak}}, \forall \tilde{\mathbf{h}}$, where $P_{\textrm{ave}}$ and $P_{\textrm{peak}}$ denote the average and peak power limits at the AP, respectively. The proposed algorithm can be modified to solve problem \eqref{TTS_problem_detail_equi} in the presence of the average and/or peak power constraints since the former can be similarly handled as constraint \eqref{long_term_constraints}, while the latter is convex and can be tackled by the WMMSE method \cite{Shi2011WMMSE}. }
\end{remark}

\begin{figure*}[t] 
	\setlength{\abovecaptionskip}{-0.2cm} 
	\setlength{\belowcaptionskip}{-0.2cm} 
	\centering
	\scalebox{0.47}{\includegraphics{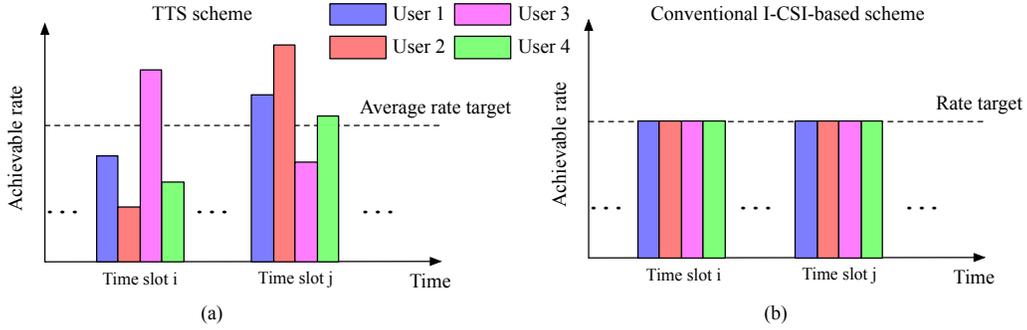}} 
	\caption{Illustration of the users' achievable rates in the TTS and conventional I-CSI-based schemes.}
	\label{multiuser_channel_diversity}
\end{figure*}

\begin{remark}
\emph{In problem \eqref{TTS_problem_detail_equi}, we have assumed for simplicity that the S-CSI and effective I-CSI are perfectly known at the AP at the beginning of the considered time interval and each of its time slots, respectively. In practice, however, channel estimation errors are inevitable due to the limited channel training resources (such as power and time) \cite{Zhao2020intelligent, zhou2020framework}. Further investigation into the TTS design under imperfect S-CSI and effective I-CSI is left for future work.}
\end{remark}

\section{Proposed PDD-TJAPB Algorithm} \label{section_algorithm}
In this section, we propose the PDD-TJAPB algorithm to solve problem \eqref{TTS_problem_detail_equi}, where a novel TTS primal-dual decomposition method is proposed to decouple the variables for efficient optimization. Specifically, we decompose problem \eqref{TTS_problem_detail_equi} into a long-term passive beamforming problem and a family of short-term active beamforming problems (each corresponds to one channel sample in $\{\tilde{\mathbf{H}}\}$). Each short-term problem is solved efficiently by employing the WMMSE method \cite{Shi2011WMMSE} to obtain a stationary point, while the long-term problem is tackled based on the CSSCA framework \cite{Liu_CSSCA_2019}, where randomly generated channel samples in $\{\tilde{\mathbf{H}}\}$ are utilized to construct concave surrogate functions for the achievable average rates $\mathbb{E} \{  r_k(\bm{\theta},\{\mathbf{w}_k(\tilde{\mathbf{h}})\}, \tilde{\mathbf{H}}) \}$, $\forall k \in \mathcal{K}$ and convex ones for the objective function $\mathbb{E} \{\sum\nolimits_{k\in \mathcal{K}} \|\mathbf{w}_k (\tilde{\mathbf{h}})\|^2 \}$. In particular, when constructing these surrogate functions, the deep unfolding technique \cite{hershey2014deep} is used to extract the gradient information of the short-term solutions $\{\mathbf{w}_k(\tilde{\mathbf{h}})\}$ obtained by the WMMSE algorithm w.r.t. the long-term (primal and dual) variables $\bm{\theta}$ and $\bm{\lambda}$.

\subsection{TTS Primal-Dual Decomposition for Problem \eqref{TTS_problem_detail_equi}}
In this subsection, we prove a novel TTS primal-dual decomposition method to decouple the optimization variables. Specifically, for fixed long-term phase shifts $\bm{\theta}$ and long-term Lagrange multipliers $\bm{\lambda}$, let $\mathbf{w}^J(\tilde{\mathbf{h}}(\bm{\theta},\tilde{\mathbf{H}}), \bm{\lambda}) = [\mathbf{w}_1^J(\tilde{\mathbf{h}}(\bm{\theta},\tilde{\mathbf{H}}), \bm{\lambda})^T ,\cdots, \mathbf{w}_K^J(\tilde{\mathbf{h}}(\bm{\theta},\tilde{\mathbf{H}}), \bm{\lambda})^T ]^T $ denote a stationary solution of the following \emph{short-term subproblem}:\footnote{Note that mathematically $\mathcal{P}_S(\tilde{\mathbf{h}}(\bm{\theta},\tilde{\mathbf{H}}), \bm{\lambda} )$ and $\mathbf{w}^J(\tilde{\mathbf{h}}(\bm{\theta},\tilde{\mathbf{H}}), \bm{\lambda})$ depend on $\tilde{\mathbf{H}}$, however since $\bm{\theta}$ is fixed when solving $\mathcal{P}_S(\tilde{\mathbf{h}}(\bm{\theta},\tilde{\mathbf{H}}), \bm{\lambda} )$, only the knowledge of the effective CSI $\{\tilde{\mathbf{h}}_k\}$ is required in practical implementation, which will be elaborated in Section \ref{sec_short_term}.}
\begin{equation} \label{short_term_problem}
\begin{aligned}
\mathcal{P}_S&(\tilde{\mathbf{h}}(\bm{\theta},\tilde{\mathbf{H}}), \bm{\lambda} ):\\
& \min \limits_{\{\mathbf{w}_k\}} \;  \sum\limits_{k\in \mathcal{K}} \|\mathbf{w}_k\|^2 + \sum\limits_{k \in \mathcal{K}}\lambda_k (R_k-r_k(\bm{\theta},\{\mathbf{w}_k\}, \tilde{\mathbf{H}})),
\end{aligned}
\end{equation}
obtained by running the short-term active beamforming sub-algorithm (based on the WMMSE method) for a sufficiently large ($J$) iterations. $\mathbf{w}^J(\tilde{\mathbf{h}}(\bm{\theta},\tilde{\mathbf{H}}), \bm{\lambda})$ thus satisfies the following KKT condition of $\mathcal{P}_S(\tilde{\mathbf{h}}(\bm{\theta},\tilde{\mathbf{H}}), \bm{\lambda} )$: 
\begin{equation} \label{KKT_short_term}
\begin{aligned}
\Bigg\| \partial_{\mathbf{w}} \Bigg(\sum\limits_{k\in \mathcal{K}} &  \|\mathbf{w}^J(\tilde{\mathbf{h}}(\bm{\theta},\tilde{\mathbf{H}}), \bm{\lambda})\|^2\Bigg) + \sum\limits_{k \in \mathcal{K}}
\lambda_k \partial_{\mathbf{w}}(R_k  \\
& - r_k(\bm{\theta},\mathbf{w}^J(\tilde{\mathbf{h}}(\bm{\theta},\tilde{\mathbf{H}}), \bm{\lambda}),\tilde{\mathbf{H}} ) ) \Bigg\| = e^J(\bm{\theta}, \bm{\lambda}),
\end{aligned}
\end{equation}
where $e^J(\bm{\theta}, \bm{\lambda})$ denotes the error due to the fact that the short-term sub-algorithm only runs for a finite number of $J$ iterations. It will be shown in Section \ref{sec_short_term} that as $J \rightarrow \infty$, the solution $\{ \mathbf{w}_k^J(\tilde{\mathbf{h}}(\bm{\theta},\tilde{\mathbf{H}}), \bm{\lambda}) \}$ obtained by running the proposed short-term sub-algorithm converges to a stationary point of $\mathcal{P}_S(\tilde{\mathbf{h}}(\bm{\theta},\tilde{\mathbf{H}}), \bm{\lambda} )$, $\forall \bm{\theta} \in \mathcal{F}^N$, $\bm{\lambda} \succeq \mathbf{0}$, i.e., $\lim_{J \rightarrow \infty}e^J(\bm{\theta}, \bm{\lambda}) = 0$. However, in practice, the short-term sub-algorithm always runs for a finite number of iterations, thus it is necessary to derive a new TTS primal-dual decomposition theorem for this case.

With the short-term policy $\{\mathbf{w}_k^J(\tilde{\mathbf{h}}(\bm{\theta},\tilde{\mathbf{H}}), \bm{\lambda})\}$, we formulate the following long-term problem:
\begin{equation} \label{long-term_problem_relax}
\begin{aligned}
\mathcal{P}_L^J:\quad\quad\quad& \\
 \min\limits_{\bm{\theta} \in \mathcal{F}^N,\;\bm{\lambda} \succeq \mathbf{0}} \; &  \mathbb{E} \left\{\sum\limits_{k\in \mathcal{K}} \|\mathbf{w}_k^J(\tilde{\mathbf{h}}(\bm{\theta},\tilde{\mathbf{H}}), \bm{\lambda})  \|^2 \right\}\\
\textrm{s.t.} \; & \mathbb{E} \{  r_k(\bm{\theta},\{ \mathbf{w}_k^J(\tilde{\mathbf{h}}(\bm{\theta},\tilde{\mathbf{H}}), \bm{\lambda})  \},\tilde{\mathbf{H}} ) \} \! \geq \! R_k, \;\forall k \in \mathcal{K}.
\end{aligned}
\end{equation}
Let $\nabla_{\bm{\theta}}  \mathbb{E} \{\sum\nolimits_{k\in \mathcal{K}} \|\mathbf{w}_k^J(\tilde{\mathbf{h}}(\bm{\theta},\tilde{\mathbf{H}}), \bm{\lambda})  \|^2 \} \triangleq \mathbb{E} \{  \partial_{\mathbf{\bm{\theta}}}  \mathbf{w}^{J}( \tilde{\mathbf{h}} (  \bm{\theta}, \tilde{\mathbf{H}} ), \allowbreak  \bm{\lambda})  \partial_{\mathbf{w}}   (\sum\nolimits_{k\in \mathcal{K}}   \|\mathbf{w}_k^J(   \tilde{\mathbf{h}}( \bm{\theta},  \tilde{\mathbf{H}}),   \bm{\lambda})\|^2 )  \} $ and 
$\nabla_{\bm{\lambda}}    \mathbb{E} \{  \sum\nolimits_{k\in \mathcal{K}}
 \|  \mathbf{w}_k^J(  \allowbreak  \tilde{\mathbf{h}}(  \bm{\theta},  \tilde{\mathbf{H}} ),   \bm{\lambda})  \|^2   \}   \triangleq \mathbb{E} \{   \partial_{\mathbf{\bm{\lambda}}}   \mathbf{w}^{J}(  \tilde{\mathbf{h}}( \bm{\theta},  \tilde{\mathbf{H}}),   \bm{\lambda}  ) \partial_{\mathbf{w}} (  \sum\nolimits_{k\in \mathcal{K}} \|   \mathbf{w}_k^J(  \tilde{\mathbf{h}}( \allowbreak \bm{\theta},\tilde{\mathbf{H}}),   \bm{\lambda})\|^2 )  \} $  denote the derivatives of $\sum\nolimits_{k\in \mathcal{K}} \|\mathbf{w}_k^J(\tilde{\mathbf{h}}(\bm{\theta},\allowbreak  \tilde{\mathbf{H}}), \bm{\lambda})\|^2$ w.r.t. $\bm{\theta}$ and $\bm{\lambda}$, respectively, where $ \partial_{\bm{\theta}} \mathbf{w}^J(\tilde{\mathbf{h}}( \allowbreak \bm{\theta}, \allowbreak\tilde{\mathbf{H}}), \allowbreak \bm{\lambda}) \in \mathbb{C}^{N  \times (MK)}$ ($\partial_{\bm{\lambda}} \mathbf{w}^J(\tilde{\mathbf{h}}(\bm{\theta},\tilde{\mathbf{H}}), \bm{\lambda}) \in \mathbb{C}^{K \times (MK)}$) is the derivative of the vector function $ \mathbf{w}^J(\tilde{\mathbf{h}}(\bm{\theta},\tilde{\mathbf{H}}),  \bm{\lambda})$ w.r.t. the vector $\bm{\theta}$ ($\bm{\lambda}$).  
Similarly, let $\nabla_{\bm{\theta}}(R_k - \mathbb{E} \{ r_k(\bm{\theta}, \mathbf{w}^J(\tilde{\mathbf{h}}( \bm{\theta}, \tilde{\mathbf{H}}), \allowbreak \bm{\lambda}) ,\tilde{\mathbf{H}} ) \} ) \triangleq  \mathbb{E}\{ \partial_{\mathbf{\bm{\theta}}} \mathbf{w}^{J}(\tilde{\mathbf{h}}(\bm{\theta},\allowbreak \tilde{\mathbf{H}}), \bm{\lambda})  \partial_{\mathbf{w}} (R_k -  r_k(\bm{\theta}, \mathbf{w}^J(\tilde{\mathbf{h}}(\bm{\theta}, \allowbreak \tilde{\mathbf{H}}), \bm{\lambda}) ,\tilde{\mathbf{H}} ) )  + \partial_{\bm{\theta}}(R_k -  \allowbreak r_k(\bm{\theta}, \allowbreak \mathbf{w}^J(\tilde{\mathbf{h}}(\bm{\theta}, \tilde{\mathbf{H}}), \bm{\lambda}),\tilde{\mathbf{H}} ) )  \}$  and $\nabla_{\bm{\lambda}} ( \allowbreak R_k - \mathbb{E} \{ r_k(\bm{\theta}, \mathbf{w}^J(\tilde{\mathbf{h}}(\bm{\theta},\allowbreak \tilde{\mathbf{H}}), \bm{\lambda}) ,\tilde{\mathbf{H}} ) \} ) \triangleq \mathbb{E} \{ \partial_{\bm{\lambda}}  \mathbf{w}^J(\tilde{\mathbf{h}}(\bm{\theta},\tilde{\mathbf{H}}), \bm{\lambda})  \allowbreak \partial_{\mathbf{w}} (R_k -  r_k(\bm{\theta},  \mathbf{w}^J ( \tilde{\mathbf{h}}(\bm{\theta}, \tilde{\mathbf{H}}), \bm{\lambda}),\tilde{\mathbf{H}} ) )  \}$. Then, we establish the following TTS primal-dual decomposition theorem for problem \eqref{TTS_problem_detail_equi}.
\newtheorem{theorem}{Theorem}
\begin{theorem}[TTS Primal-Dual Decomposition] \label{theorem_2}
	\emph{
	Suppose that for every $\tilde{\mathbf{H}} \in \Omega$, $\mathbf{w}^J(\tilde{\mathbf{h}}(\bm{\theta},\tilde{\mathbf{H}}), \bm{\lambda}) $ is a continuously differentiable function of $\bm{\theta} \in \mathcal{F}^N$ and $\bm{\lambda} \succeq \mathbf{0}$. Let $(\bm{\theta}^*, \bm{\lambda}^*)$ denote a stationary point of $\mathcal{P}_L^J$, i.e., there exist Lagrange multipliers $\tilde{\bm{\lambda}} = [\tilde{\lambda}_1,\cdots, \tilde{\lambda}_K]^T \succeq \mathbf{0}$ such that the following KKT conditions are satisfied:
	\begin{equation} \label{KKT_1}
	\begin{aligned}
	 \nabla_{\bm{\theta}}  \mathbb{E} \Bigg\{\sum\limits_{k\in \mathcal{K}} & \|\mathbf{w}_k^J(\tilde{\mathbf{h}}(\bm{\theta}^*,\tilde{\mathbf{H}}), \bm{\lambda}^*)  \|^2 \Bigg\}  + \sum\limits_{k \in \mathcal{K}} \tilde{\lambda}_k \nabla_{\bm{\theta}} (R_k \\
	& - \mathbb{E} \{ r_k(\bm{\theta}^*, \mathbf{w}^J(\tilde{\mathbf{h}}(\bm{\theta}^*,\tilde{\mathbf{H}}), \bm{\lambda}^*) ,\tilde{\mathbf{H}} ) \} ) =\mathbf{0},
	\end{aligned}
	\end{equation}
	\begin{equation} \label{KKT_2}
		\begin{aligned}
	 \nabla_{\bm{\lambda}} \mathbb{E} \Bigg\{\sum\limits_{k\in \mathcal{K}}&  \|\mathbf{w}_k^J(\tilde{\mathbf{h}}(\bm{\theta}^*,\tilde{\mathbf{H}}), \bm{\lambda}^*)  \|^2 \Bigg\} + \sum\limits_{k \in \mathcal{K}} \tilde{\lambda}_k \nabla_{\bm{\lambda}} (R_k \\
	& - \mathbb{E} \{ r_k(\bm{\theta}^*, \mathbf{w}^J(\tilde{\mathbf{h}}(\bm{\theta}^*,\tilde{\mathbf{H}}), \bm{\lambda}^*) ,\tilde{\mathbf{H}} ) \} ) =\mathbf{0},
	\end{aligned}
	\end{equation}
	\begin{equation} \label{KKT_3}
	\! \tilde{\lambda}_k (R_k \! -\! \mathbb{E} \{ r_k(\bm{\theta}^*, \mathbf{w}^J(\tilde{\mathbf{h}}(\bm{\theta}^*,\tilde{\mathbf{H}}), \bm{\lambda}^*) ,\tilde{\mathbf{H}} ) \} ) \!=\! 0, \;\forall k \! \in \! \mathcal{K},
	\end{equation}
	\begin{equation} \label{KKT_4}
	R_k - \mathbb{E} \{ r_k(\bm{\theta}^*, \mathbf{w}^J(\tilde{\mathbf{h}}(\bm{\theta}^*,\tilde{\mathbf{H}}), \bm{\lambda}^*) ,\tilde{\mathbf{H}} ) \}  \leq 0,\;\forall k \in \mathcal{K}.
	\end{equation}
	Then, the primal-dual pair $(\bm{\theta}^*, \bm{\varUpsilon}^* = \{ \mathbf{w}_k^J(\tilde{\mathbf{h}}(\bm{\theta}^*, \tilde{\mathbf{H}}),  \bm{\lambda}^*)  , \allowbreak \forall \tilde{\mathbf{H}} \},  \bm{\lambda}^*)$ satisfies the stationary conditions of problem \eqref{TTS_problem_detail_equi}, i.e., \eqref{stationary_1}, \eqref{stationary_2} and \eqref{stationary_3}, up to an error of $O(e^J(\bm{\theta}^*, \bm{\lambda}^*))$, where $\lim_{J\rightarrow \infty} e^J(\bm{\theta}^*, \bm{\lambda}^*) = 0$, providing that the following linear independence regularity condition (LIRC) holds: the gradients of the long-term constraints $\nabla_{\bm{\lambda}} (R_k - \mathbb{E} \{ r_k(\bm{\theta}^*, \mathbf{w}^J(\tilde{\mathbf{h}}(\bm{\theta}^*,\tilde{\mathbf{H}}), \bm{\lambda}^*) ,\tilde{\mathbf{H}} ) \} )$, $\forall k \in \mathcal{K}$ are linearly independent.
}
\end{theorem}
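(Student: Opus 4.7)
The idea is to take the four KKT conditions \eqref{KKT_1}--\eqref{KKT_4} of $\mathcal{P}_L^J$ at $(\bm{\theta}^*,\bm{\lambda}^*)$ with multipliers $\tilde{\bm{\lambda}}$, combine them with the finite-iteration short-term KKT condition \eqref{KKT_short_term} at $(\bm{\theta}^*,\bm{\lambda}^*)$, and show that (up to errors of order $e^J$) the dual variable $\bm{\lambda}^*$ must coincide with the long-term multiplier $\tilde{\bm{\lambda}}$; once this identification is made, the four stationarity relations \eqref{stationary_1}--\eqref{stationary_3} of the original problem fall out. I would structure the argument in four steps.

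\textbf{Step 1: short-term stationarity.} Relation \eqref{stationary_1} at $\bm{w}=\bm{w}^J(\tilde{\bm{h}}(\bm{\theta}^*,\tilde{\bm{H}}),\bm{\lambda}^*)$ is, after identifying $\bm{\lambda}=\bm{\lambda}^*$, precisely the short-term KKT equation \eqref{KKT_short_term} with residual $e^J(\bm{\theta}^*,\bm{\lambda}^*)$. So once the dual identification of Step~3 is established, \eqref{stationary_1} holds to order $e^J$ directly. Nothing else is needed here; this step mostly fixes notation and reminds the reader that as $J\to\infty$ the residual vanishes by the convergence claim from Section \ref{sec_short_term}.

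\textbf{Step 2: chain-rule expansion of the long-term KKT.} I would expand the total derivatives $\nabla_{\bm{\theta}}$ and $\nabla_{\bm{\lambda}}$ appearing in \eqref{KKT_1} and \eqref{KKT_2} via the chain rule, using the definitions of $\nabla_{\bm{\theta}}\mathbb{E}\{\sum_k \|\bm{w}_k^J\|^2\}$, $\nabla_{\bm{\lambda}}\mathbb{E}\{\sum_k \|\bm{w}_k^J\|^2\}$ and the analogous expressions for $\nabla_{\bm{\theta}}(R_k-\mathbb{E}\{r_k\})$, $\nabla_{\bm{\lambda}}(R_k-\mathbb{E}\{r_k\})$ given in the paragraph preceding the theorem. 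Every term splits into a part proportional to the short-term ``Lagrangian gradient''
\[
G(\tilde{\bm{H}})\;\triangleq\;\partial_{\bm{w}}\!\Bigl(\sum_{k}\|\bm{w}_k^J\|^2\Bigr) + \sum_{k}\tilde{\lambda}_k\,\partial_{\bm{w}}\bigl(R_k-r_k\bigr),
\]
and, in the case of \eqref{KKT_1}, an additional explicit $\bm{\theta}$-derivative $\sum_k\tilde{\lambda}_k\partial_{\bm{\theta}}(R_k-r_k)$.

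\textbf{Step 3: identifying $\tilde{\bm{\lambda}}\approx\bm{\lambda}^*$ via LIRC.} By \eqref{KKT_short_term} evaluated at $(\bm{\theta}^*,\bm{\lambda}^*)$, the bracket $G(\tilde{\bm{H}})$ differs from its $\tilde{\bm{\lambda}}\!\to\!\bm{\lambda}^*$ counterpart by $\sum_k(\tilde{\lambda}_k-\lambda_k^*)\partial_{\bm{w}}(R_k-r_k)$, and the latter counterpart has norm $e^J(\bm{\theta}^*,\bm{\lambda}^*)$. Substituting this decomposition into the expanded form of \eqref{KKT_2} and contracting against $\partial_{\bm{\lambda}}\bm{w}^J$ inside the expectation collapses the expression to
\[
\sum_{k}(\tilde{\lambda}_k-\lambda_k^*)\,\nabla_{\bm{\lambda}}\!\bigl(R_k-\mathbb{E}\{r_k\}\bigr) \;=\; O\!\bigl(e^J(\bm{\theta}^*,\bm{\lambda}^*)\bigr).
\]
The LIRC hypothesis (the $K$ vectors $\nabla_{\bm{\lambda}}(R_k-\mathbb{E}\{r_k\})$ are linearly independent) then yields $\tilde{\lambda}_k=\lambda_k^*+O(e^J)$ for every $k$. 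This coupling of the outer dual variable to the inner ``Lagrangian weight'' is the crux of the whole argument, and I expect it to be the main obstacle: one has to control the error carefully (boundedness of $\partial_{\bm{\lambda}}\bm{w}^J$, interchangeability of derivative and expectation) before LIRC can be invoked, and the differentiability hypothesis on $\bm{w}^J$ is exactly what is needed for that.

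\textbf{Step 4: recovering the remaining original-problem conditions.} With $\tilde{\bm{\lambda}}=\bm{\lambda}^*+O(e^J)$ in hand, feed this identification back into the expanded \eqref{KKT_1}: the $G(\tilde{\bm{H}})$-contracted piece is again $O(e^J)$ by \eqref{KKT_short_term}, leaving $\sum_k\lambda_k^*\,\partial_{\bm{\theta}}(R_k-\mathbb{E}\{r_k\})=O(e^J)$, which is the first line of \eqref{stationary_2}; its second line is just \eqref{KKT_4}. Finally, \eqref{KKT_3} together with $\tilde{\bm{\lambda}}=\bm{\lambda}^*+O(e^J)$ gives the complementary slackness \eqref{stationary_3} to the same order. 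Putting Steps 1--4 together, the triple $(\bm{\theta}^*,\bm{\varUpsilon}^*,\bm{\lambda}^*)$ satisfies \eqref{stationary_1}--\eqref{stationary_3} up to $O(e^J(\bm{\theta}^*,\bm{\lambda}^*))$, and the vanishing of $e^J$ as $J\to\infty$ (asserted and to be proved in Section \ref{sec_short_term}) completes the theorem.
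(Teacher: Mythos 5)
Your proposal is correct and follows essentially the same route as the paper's proof in Appendix \ref{appendix_b}: contracting the short-term KKT residual \eqref{KKT_short_term} against $\partial_{\bm{\lambda}}\mathbf{w}^J$ and comparing with \eqref{KKT_2} to invoke the LIRC and identify $\tilde{\bm{\lambda}}=\bm{\lambda}^*+O(e^J)$, then feeding this back into the chain-rule expansion of \eqref{KKT_1} to recover \eqref{stationary_2}, with \eqref{stationary_1} and \eqref{stationary_3} following directly. Your explicit remark that one must control boundedness of $\partial_{\bm{\lambda}}\mathbf{w}^J$ and the interchange of derivative and expectation is a point the paper passes over silently, but otherwise the two arguments coincide step for step.
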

\begin{proof}
	Please refer to Appendix \ref{appendix_b}.
\end{proof}

Note that the LIRC in Theorem \ref{theorem_2} is used to guarantee that the complementary slackness condition in \eqref{stationary_3}  is satisfied. Besides, the condition that $\mathbf{w}^J(\tilde{\mathbf{h}}(\bm{\theta},\tilde{\mathbf{H}}), \bm{\lambda}) $ is continuously differentiable can be guaranteed by the short-term sub-algorithm, as will be elaborated in Section \ref{sec_short_term}. Theorem \ref{theorem_2} is promising since it shows that a stationary solution (up to the error of $e^J(\bm{\theta}^*, \bm{\lambda}^*)$) can be found by  solving the long-term problem $\mathcal{P}_L^J$ to obtain a stationary point $(\bm{\theta}^*, \bm{\lambda}^*)$, and finding a stationary point $\mathbf{w}^J(\tilde{\mathbf{h}}(\bm{\theta},\tilde{\mathbf{H}}), \bm{\lambda})$ of $\mathcal{P}_S(\tilde{\mathbf{h}}(\bm{\theta},\tilde{\mathbf{H}}), \bm{\lambda})$ for each $\tilde{\mathbf{H}}$. In the following, we will present how to solve $\mathcal{P}_S(\tilde{\mathbf{h}}(\bm{\theta},\tilde{\mathbf{H}}), \bm{\lambda} )$ and $\mathcal{P}_L^J$ efficiently.

\subsection{Solving the Short-term Problem} \label{sec_short_term}
With given $\bm{\theta}$ and $\bm{\lambda}$, we observe that $\mathcal{P}_S(\tilde{\mathbf{h}}(\bm{\theta},\tilde{\mathbf{H}}), \bm{\lambda} )$ is an unconstrained optimization problem whose objective function is a weighted-sum of the total transmit power $ \sum\nolimits_{k\in \mathcal{K}} \|\mathbf{w}_k\|^2$ and the achievable rates $\{ r_k(\bm{\theta},\{\mathbf{w}_k\}, \tilde{\mathbf{H}})\}$ with the weighting factors given by $\{-\lambda_k \}$.\footnote{The constant $\sum\nolimits_{k \in \mathcal{K}}\lambda_k R_k$ is omitted without loss of optimality.} Therefore, the WMMSE method \cite{Shi2011WMMSE} can be employed to iteratively obtain a stationary solution of $\mathcal{P}_S(\tilde{\mathbf{h}}(\bm{\theta},\tilde{\mathbf{H}}), \bm{\lambda} )$. Specifically, by introducing auxiliary receive (scaling) coefficients $\{u_k\}$ and weighting factors $\{q_k\}$\footnote{Note that these weighting factors are auxiliary variables that are different from $\{-\lambda_k \}$.}, $\mathcal{P}_S(\tilde{\mathbf{h}}(\bm{\theta},\tilde{\mathbf{H}}), \bm{\lambda} )$ is equivalent to the following problem:
\begin{equation} \label{WMMSE_problem}
\min \limits_{\{\mathbf{w}_k,\;u_k,\;q_k \}} \; \sum\limits_{k\in \mathcal{K}} \|\mathbf{w}_k\|^2+ \sum\limits_{k \in \mathcal{K}}\lambda_k (q_k e_k -\log(q_k)),
\end{equation}
where $e_k = \mathbb{E}\{ (u_k^H y_k - s_k)(u_k^H y_k-s_k)^H\} $ denotes the mean squared error (MSE) of user $k$ and is given by
\begin{equation}
e_k = |u_k^H  {\tilde{\mathbf{h}}_k^H} \mathbf{w}_k-1|^2 + \sum\limits_{j \in \mathcal{K} \backslash k} |u_k^H {\tilde{\mathbf{h}}_k^H} \mathbf{w}_j|^2 + \sigma_k^2 |u_k|^2.
\end{equation}

Then, we observe that problem \eqref{WMMSE_problem} can be efficiently solved by alternately optimizing one of the block variables $\{u_k \}$, $\{q_k \}$ and $\{\mathbf{w}_k\}$ with the others being fixed and the details are given as follows.
\subsubsection{Optimizing $u_k$} It can be seen that optimizing $\{u_k\}$ is equivalent to minimizing the sum-MSE $\sum\nolimits_{k \in \mathcal{K}} e_k$ (since $\lambda_k$'s are long-term variables and fixed during optimizing $u_k$), which leads to the following linear minimum MSE (MMSE) receive coefficients:
\begin{equation} \label{u_update}
u_k = \frac{\tilde{\mathbf{h}}_k^H \mathbf{w}_k}{\sum\limits_{j \in \mathcal{K}} |\tilde{\mathbf{h}}_k^H \mathbf{w}_j|^2 + \sigma_k^2}, \;\forall k \in \mathcal{K}.
\end{equation}
\subsubsection{Optimizing $q_k$} The optimal solution can be easily obtained as 
\begin{equation}  \label{q_update}
q_k = \frac{1}{e_k},\;\forall k \in \mathcal{K}.
\end{equation}
\subsubsection{Optimizing $\mathbf{w}_k$} By resorting to the first-order optimality condition of problem \eqref{WMMSE_problem} w.r.t. $\mathbf{w}_k$, we have
\begin{equation} \label{w_update}
\mathbf{w}_k \! = \! \lambda_k q_k u_k\Bigg(\mathbf{I} \!+\! \sum\limits_{k \in \mathcal{K}} \lambda_k q_k |u_k|^2 \tilde{\mathbf{h}}_k \tilde{\mathbf{h}}_k^H \Bigg)^{\!-\!1} \tilde{\mathbf{h}}_k,\;\forall k \! \in \! \mathcal{K}.
\end{equation}
Therefore, the short-term problem $\mathcal{P}_S(\tilde{\mathbf{h}}(\bm{\theta},\tilde{\mathbf{H}}), \bm{\lambda} )$ can be efficiently solved by iterating over the abovementioned three steps and the details are shown in Algorithm \ref{WMMSE_algorithm}.

In addition, we observe that: 1) for any $\bm{\theta} \in \mathcal{F}^N$, $\mathbf{w} =[\mathbf{w}_1^T, \cdots, \mathbf{w}_K^T]^T \in \mathbb{C}^{MK \times 1}$, $\bm{\lambda} \succeq \mathbf{0}$ and iteration number $j$, $\mathbf{w}^{j} (\tilde{\mathbf{h}}(\bm{\theta},\tilde{\mathbf{H}}), \bm{\lambda})$ is differentiable w.r.t. $\bm{\theta}$ and $\bm{\lambda}$ with probability one (w.p.1.); 2) for any $\bm{\theta} \in \mathcal{F}^N$  and $\bm{\lambda} \succeq \mathbf{0}$, the sequence $\{ \mathbf{w}^{j} (\tilde{\mathbf{h}}(\bm{\theta},\tilde{\mathbf{H}}), \bm{\lambda})\}$ converges to a stationary solution of $\mathcal{P}_S(\tilde{\mathbf{h}}(\bm{\theta},\tilde{\mathbf{H}}), \bm{\lambda} )$, w.p.1. \cite[Theorem 3]{Shi2011WMMSE}. On the other hand, we can always choose a proper initialization method for $\{\mathbf{w}_k\}$, such that it is differentiable w.r.t. $\bm{\theta}$ and $\bm{\lambda}$, w.p.1., therefore, the continuously differentiable condition in Theorem \ref{theorem_2} can be guaranteed.

\begin{algorithm}[t] \small
	\caption{{WMMSE Algorithm for Solving Problem \eqref{short_term_problem}}} \label{WMMSE_algorithm}
	\begin{algorithmic}
		\STATE \textbf{Initialize}: $\{\mathbf{w}_k\}$ and $i=1$.
		\REPEAT
		\STATE Update $\{ u_k\}$, $\{ q_k\}$ and $\{ \mathbf{w}_k\}$ successively according to \eqref{u_update}, \eqref{q_update} and \eqref{w_update}, respectively.
		\STATE $i \leftarrow i+1$. 
		\UNTIL{The maximum number of $J$ iterations is reached.}
	\end{algorithmic}
\end{algorithm}

\subsection{Solving the Long-term Problem} \label{sec_long_term}
In this subsection, we propose to solve the long-term problem $\mathcal{P}_L^J$ based on the CSSCA framework proposed in \cite{Liu_CSSCA_2019}, where an iterative algorithm is presented and in the each iteration, the long-term variables $\bm{\theta}$ and $\bm{\lambda}$ are updated by solving a convex surrogate problem obtained by replacing the objective and constraint functions in $\mathcal{P}_L^J$ with their convex surrogate functions. Specifically, each iteration consists of three steps, where the first step is to construct the surrogate functions, the second step is to solve the resulting convex optimization problem and the third step is to update the long-term variables $\bm{\theta}$ and $\bm{\lambda}$, as elaborated below. 
\subsubsection{Step 1} At the beginning of the $t$-th iteration, one random mini-batch $\{\tilde{\mathbf{H}}^t_j, {j=1,\cdots,B}\}$ of $B$ channel samples are generated according to the S-CSI and the surrogate functions, denoted by $\{\bar{f}_k^t(\bm{\theta}, \bm{\lambda}), k\in 0 \cup \mathcal{K}\}$, are constructed based on the mini-batch $\{\tilde{\mathbf{H}}^t_j\}$ and the current values $(\bm{\theta}^t,\bm{\lambda}^t)$, where $\bar{f}_0^t(\bm{\theta}, \bm{\lambda})$ is the surrogate function for the objective function $\mathbb{E} \{\sum\nolimits_{k\in \mathcal{K}} \|\mathbf{w}_k^J(\tilde{\mathbf{h}}(\bm{\theta},\tilde{\mathbf{H}}), \bm{\lambda})\|^2 \}$, while $\{\bar{f}_k^t(\bm{\theta}, \bm{\lambda}),k\in\mathcal{K} \}$ are the surrogate functions for the constraint functions $R_k - \mathbb{E} \{  r_k(\bm{\theta}, \{ \mathbf{w}_k^J(\tilde{\mathbf{h}}(\bm{\theta},\tilde{\mathbf{H}}), \bm{\lambda})  \},\tilde{\mathbf{H}} ) \}, \forall k \in \mathcal{K}$. Specifically, $\{\bar{f}_k^t(\bm{\theta}, \bm{\lambda})\}$'s can be constructed as \cite{Liu_CSSCA_2019}
\begin{equation} \label{surrogate_function_update}
\begin{aligned}
\bar{f}_k^t(\bm{\theta}, \bm{\lambda}) \!  = & f_k^t \!+\! (\mathbf{f}_{\bm{\theta},k}^t \!+\! \mathbf{f}_{\mathbf{w},k}^t)^T(\bm{\theta} \!-\! \bm{\theta}^t) \!+\! (\mathbf{f}_{\bm{\lambda},k}^t)^T(\bm{\lambda} \!-\! \bm{\lambda}^t) \\
& \!+\! \tau_k (\|\bm{\theta} \!-\! \bm{\theta}^t\|^2 \!+\! \|\bm{\lambda} \!-\! \bm{\lambda}^t\|^2),\;\forall k \in 0 \cup \mathcal{K},
\end{aligned}
\end{equation}
where $\tau_k > 0$ can be any constant and is imposed to ensure that $\bar{f}_k^t(\bm{\theta}, \bm{\lambda}) $ is uniformly strongly convex in $\bm{\theta}$ and $\bm{\lambda}$; $f_0^t$ and $f_k^{t}, \forall k  \in \mathcal{K}$ are approximations for $\mathbb{E} \{\sum\nolimits_{k\in \mathcal{K}} \|\mathbf{w}_k^J(\tilde{\mathbf{h}}(\bm{\theta}^t,\tilde{\mathbf{H}}), \bm{\lambda}^t)\|^2 \}$ and $R_k - \mathbb{E} \{  r_k(\bm{\theta}^t,\{ \mathbf{w}_k^J(\tilde{\mathbf{h}}(\bm{\theta}^t,\tilde{\mathbf{H}}), \bm{\lambda}^t)  \},\tilde{\mathbf{H}} ) \}$, respectively, and they are updated iteratively according to
\begin{equation}
f_0^{t} = (1-\rho^t )f_0^{t-1}+ \rho^t \frac{1}{B} \sum\limits_{j=1}^B \|\mathbf{w}_j^t\|^2,
\end{equation}
\begin{equation} 
\begin{aligned}
f_k^{t} = (1 & -\rho^t )f_k^{t-1} \\
& + \rho^t \frac{1}{B} \sum\limits_{j=1}^B (R_k -  r_k(\bm{\theta}^t, \mathbf{w}_j^t ,{\tilde{\mathbf{H}}_j^t} ) ),\;\forall k \in \mathcal{K},
\end{aligned}
\end{equation}
where $\{\rho^t \in (0,1]\} $ is a decreasing sequence satisfying $\rho^t \rightarrow 0$, $\sum_t \rho^t = \infty$ and $\sum_t (\rho^t)^2 < \infty$; $f_k^{-1} = 0$ and $\mathbf{w}_j^t$ denotes the abbreviation for $\{ \mathbf{w}_k^J(\tilde{\mathbf{h}}(\bm{\theta}^t,\tilde{\mathbf{H}}_j^t), \bm{\lambda}^t )  \}$; $\mathbf{f}_{\mathbf{w},0}^t$ and $\mathbf{f}_{\bm{\lambda},0}^t$ are approximations for the gradients $\mathbb{E} \{ \partial_{\bm{\theta}} \mathbf{w}^J(\tilde{\mathbf{h}}(\bm{\theta}^t,\tilde{\mathbf{H}}), \bm{\lambda}^t) \partial_{\mathbf{w}} (\sum\nolimits_{k\in \mathcal{K}} \|\mathbf{w}_k^J(\tilde{\mathbf{h}}(\bm{\theta}^t,\tilde{\mathbf{H}}), \bm{\lambda}^t)\|^2)\} $ and $\mathbb{E} \{ \partial_{\bm{\lambda}} \mathbf{w}^J(\tilde{\mathbf{h}}(\bm{\theta}^t,\tilde{\mathbf{H}}), \bm{\lambda}^t) \partial_{\mathbf{w}}(\sum\nolimits_{k\in \mathcal{K}} \|\mathbf{w}_k^J(\tilde{\mathbf{h}}(\bm{\theta}^t,\tilde{\mathbf{H}}), \bm{\lambda}^t)\|^2 ) \} $, respectively, and $ \mathbf{f}_{\bm{\theta},0}^t = \mathbf{0}$; $\mathbf{f}_{\bm{\theta},k}^t$, $ \mathbf{f}_{\mathbf{w},k}^t$ and $\mathbf{f}_{\bm{\lambda},k}^t$, $\forall k \in \mathcal{K}$ are approximations for the gradients $\mathbb{E} \{ \partial_{\bm{\theta}}(R_k  - r_k(\bm{\theta}^t,\{ \mathbf{w}_k^J(\tilde{\mathbf{h}}(\bm{\theta}^t, \allowbreak \tilde{\mathbf{H}}), \bm{\lambda}^t)  \},\tilde{\mathbf{H}} ) )\} $, $\mathbb{E} \{\partial_{\bm{\theta}} \mathbf{w}^J(\tilde{\mathbf{h}}(\bm{\theta}^t,\tilde{\mathbf{H}}), \bm{\lambda}^t) \partial_{\mathbf{w}}(R_k - r_k(\bm{\theta}^t, \{ \mathbf{w}_k^J(\allowbreak  \tilde{\mathbf{h}}(\bm{\theta}^t, \tilde{\mathbf{H}}), \bm{\lambda}^t)  \},\tilde{\mathbf{H}} ) )\} $ and $\mathbb{E} \{ \nabla_{\bm{\lambda}}(R_k - r_k(\bm{\theta}^t,\{ \mathbf{w}_k^J(\tilde{\mathbf{h}}(\bm{\theta}^t, \tilde{\mathbf{H}}), \allowbreak \bm{\lambda}^t)  \},\tilde{\mathbf{H}} )) \} $, respectively. $\mathbf{f}_{\mathbf{w},0}^t$ and $\mathbf{f}_{\bm{\lambda},0}^t$ are updated iteratively according to
\begin{equation}
\begin{aligned}
\mathbf{f}_{\mathbf{w},0}^t = & (1-\rho^t)\mathbf{f}_{\mathbf{w},0}^{t-1} + \rho^t \frac{1}{B} \sum\limits_{j=1}^B \partial_{\bm{\theta}}\mathbf{w}^J(\tilde{\mathbf{h}}(\bm{\theta}^t,\tilde{\mathbf{H}}_j^t), \bm{\lambda}^t) \\
& \times \partial_{\mathbf{w}} \left(  \sum\limits_{k\in \mathcal{K}} \|\mathbf{w}_k^J(\tilde{\mathbf{h}}(\bm{\theta}^t,\tilde{\mathbf{H}}_j^t), \bm{\lambda}^t)\|^2\right),
\end{aligned}
\end{equation}
\begin{equation}
 \begin{aligned}
\mathbf{f}_{\bm{\lambda},0}^t= & (1-\rho^t)\mathbf{f}_{\bm{\lambda},0}^{t-1}  + \rho^t \frac{1}{B} \sum\limits_{j=1}^B  \partial_{\bm{\lambda}}\mathbf{w}^J(\tilde{\mathbf{h}}(\bm{\theta}^t,\tilde{\mathbf{H}}_j^t), \bm{\lambda}^t) \\
& \times  \partial_{\mathbf{w}} \left(  \sum\limits_{k\in \mathcal{K}} \|\mathbf{w}_k^J(\tilde{\mathbf{h}}(\bm{\theta}^t,\tilde{\mathbf{H}}_j^t), \bm{\lambda}^t)\|^2\right), \\
\end{aligned}
\end{equation}
while $\mathbf{f}_{\bm{\theta},k}^t$, $ \mathbf{f}_{\mathbf{w},k}^t$ and $\mathbf{f}_{\bm{\lambda},k}^t$, $\forall k \in \mathcal{K}$ are updated by
\begin{equation}
\begin{aligned}
\mathbf{f}_{\bm{\theta},k}^t \!=\! (1\!-\! \rho^t)\mathbf{f}_{\bm{\theta},k}^{t-1} \!+\! \rho^t \frac{1}{B} \sum\limits_{j=1}^B \partial_{\bm{\theta}}(R_k \!-\! r_k(\bm{\theta}^t, \mathbf{w}_j^t,{\tilde{\mathbf{H}}_j^t} ) ),
\end{aligned}
\end{equation}
\begin{equation}
\begin{aligned}
\mathbf{f}_{\mathbf{w},k}^t = & (1-\rho^t)\mathbf{f}_{\mathbf{w},k}^{t-1}  + \rho^t \frac{1}{B} \sum\limits_{j=1}^B \partial_{\bm{\theta}} \mathbf{w}^J(\tilde{\mathbf{h}}(\bm{\theta}^t,\tilde{\mathbf{H}}_j^t), \bm{\lambda}^t) \\
& \times  \partial_{\mathbf{w}}(R_k - r_k(\bm{\theta}^t,\mathbf{w}_j^t,{\tilde{\mathbf{H}}_j^t} ) ), \\
 \end{aligned}
 \end{equation}
 \begin{equation}
 \begin{aligned}
\mathbf{f}_{\bm{\lambda},k}^t = & (1-\rho^t)\mathbf{f}_{\bm{\lambda},k}^{t-1}  + \rho^t \frac{1}{B} \sum\limits_{j=1}^B  \partial_{\bm{\lambda}} \mathbf{w}^J(\tilde{\mathbf{h}}(\bm{\theta}^t,\tilde{\mathbf{H}}_j^t), \bm{\lambda}^t) \\
& \times \partial_{\mathbf{w}}(R_k - r_k(\bm{\theta}^t, \mathbf{w}_j^t ,{\tilde{\mathbf{H}}_j^t} ) ), \\
\end{aligned}
\end{equation}
with $\mathbf{f}_{\bm{\theta},k}^{-1} = \mathbf{0}$, $\mathbf{f}_{\mathbf{w},k}^{-1} = \mathbf{0}$ and $\mathbf{f}_{\bm{\lambda},k}^{-1} = \mathbf{0}$, $\forall k \in 0 \cup \mathcal{K}$. Note that the approximations $f_0^{t}$, $\mathbf{f}_{\mathbf{w},0}^t$ and $\mathbf{f}_{\bm{\lambda},0}^t$ can converge to the true values of $\mathbb{E} \{\sum\nolimits_{k\in \mathcal{K}} \|\mathbf{w}_k^J(\tilde{\mathbf{h}}(\bm{\theta}^t,\tilde{\mathbf{H}}), \bm{\lambda}^t)  \|^2 \}$ and its gradients w.r.t. $\bm{\theta}$ and $\bm{\lambda}$ (i.e., $\nabla_{\bm{\theta}}  \mathbb{E} \{\sum\nolimits_{k\in \mathcal{K}} \|\mathbf{w}_k^J(\tilde{\mathbf{h}}(\bm{\theta}^t, \tilde{\mathbf{H}}), \allowbreak\bm{\lambda}^t)  \|^2 \} $ and $\nabla_{\bm{\lambda}}  \mathbb{E} \{\sum\nolimits_{k\in \mathcal{K}} \|\mathbf{w}_k^J(\tilde{\mathbf{h}}(\bm{\theta}^t,\tilde{\mathbf{H}}), \bm{\lambda}^t)  \|^2 \} $), as $t \rightarrow \infty$, 
while $f_k^{t}$, $\mathbf{f}_{\bm{\theta},k}^t \allowbreak + \mathbf{f}_{\mathbf{w},k}^t$ and $\mathbf{f}_{\bm{\lambda},k}^t$, $\forall k \in \mathcal{K}$ can converge to the true values of $R_k - \mathbb{E} \{ r_k(\bm{\theta}^t,\{ \mathbf{w}_k^J(\tilde{\mathbf{h}}(\bm{\theta}^t,\tilde{\mathbf{H}}), \bm{\lambda}^t)  \},\tilde{\mathbf{H}} ) \} $ and its gradients w.r.t. $\bm{\theta}$ and $\bm{\lambda}$ (i.e., $\nabla_{\bm{\theta}}(R_k - \mathbb{E} \{ r_k(\bm{\theta}^t,\{ \mathbf{w}_k^J(\tilde{\mathbf{h}}(\bm{\theta}^t,\tilde{\mathbf{H}}), \bm{\lambda}^t)  \},\allowbreak \tilde{\mathbf{H}} ) \} )$ and $\nabla_{\bm{\lambda}} (R_k - \mathbb{E} \{ r_k(\bm{\theta}^t,\{ \mathbf{w}_k^J(\tilde{\mathbf{h}}(\bm{\theta}^t,\tilde{\mathbf{H}}), \bm{\lambda}^t)  \},\tilde{\mathbf{H}} ) \} )$), as $t \rightarrow \infty$, which will be proved in Lemma \ref{lemma_1} of Section \ref{Section_proof}. Therefore, the function values and gradients of $\bar{f}_k^t(\bm{\theta}, \bm{\lambda}) $, $\forall k \in 0 \cup \mathcal{K}$  are consistent with those of the objective and constraint functions $\mathbb{E} \{\sum\nolimits_{k\in \mathcal{K}} \|\mathbf{w}_k^J(\tilde{\mathbf{h}}(\bm{\theta},\tilde{\mathbf{H}}), \bm{\lambda})  \|^2 \}$ and $R_k - \mathbb{E} \{ r_k(\bm{\theta},\{ \mathbf{w}_k^J(\tilde{\mathbf{h}}(\bm{\theta},\tilde{\mathbf{H}}), \bm{\lambda})  \},\tilde{\mathbf{H}} ) \} $, $\forall k \in \mathcal{K}$ at the current values $( \bm{\theta}^t, \bm{\lambda}^t)$, which is the key to guarantee the convergence of the proposed algorithm to a stationary point.
\subsubsection{Step 2} In this step, the optimal solution $(\bar{\bm{\theta}}^t,\bar{\bm{\lambda}}^t )$ of the following problem is obtained:
\begin{equation} \label{long_term_problem_appro1}
\begin{aligned}
(\bar{\bm{\theta}}^t,\bar{\bm{\lambda}}^t) = \arg\min\limits_{ \bm{\theta} \in \mathcal{F}^N,\;\bm{\lambda} \succeq \mathbf{0} } \; & \bar{f}_0^t(\bm{\theta}, \bm{\lambda})\\
\textrm{s.t.}\; & \bar{f}_k^t(\bm{\theta}, \bm{\lambda}) \leq 0,\forall k \in \mathcal{K},
\end{aligned}
\end{equation}
which is a convex approximation of $\mathcal{P}_L^J$. It is worth noting that problem \eqref{long_term_problem_appro1} is not necessarily feasible, thus the optimal solution $(\bar{\bm{\theta}}^t,\bar{\bm{\lambda}}^t )$ of the following convex problem is attained if problem \eqref{long_term_problem_appro1} turns out to be infeasible:
\begin{equation} \label{long_term_problem_appro}
\begin{aligned}
(\bar{\bm{\theta}}^t,\bar{\bm{\lambda}}^t) = \arg\min\limits_{ \bm{\theta} \in \mathcal{F}^N,\;\bm{\lambda} \succeq \mathbf{0},\; \alpha} \; & \alpha\\
\textrm{s.t.}\; & \bar{f}_k^t(\bm{\theta}, \bm{\lambda}) \leq \alpha,\forall k \in \mathcal{K}.
\end{aligned}
\end{equation}
Note that optimizing $\bm{\theta}$ and $\bm{\lambda}$ in problem \eqref{long_term_problem_appro1} is able to minimize the average transmit power since $\mathbf{w}^J(\tilde{\mathbf{h}}(\bm{\theta},\tilde{\mathbf{H}}), \bm{\lambda})$ is a function of $\bm{\theta}$ and $\bm{\lambda}$, while by solving problem \eqref{long_term_problem_appro}, we approximately minimize the gap between $\mathbb{E} \{r_k(\bm{\theta},\{ \mathbf{w}_k^J(\tilde{\mathbf{h}}(\bm{\theta},\tilde{\mathbf{H}}), \bm{\lambda})  \},\tilde{\mathbf{H}} ) \} $ and the corresponding rate target $R_k$, i.e., $\alpha$, which helps to pull the solution to the feasible region when the current problem at iteration $t$ is infeasible.
It can be seen that both problems \eqref{long_term_problem_appro1} and \eqref{long_term_problem_appro} are convex with simple quadratic objective and constraint functions, which can be efficiently solved by off-the-shelf solvers, such as CVX \cite{CVX}.

\subsubsection{Step 3} After obtaining $(\bar{\bm{\theta}}^t,\bar{\bm{\lambda}}^t)$, $\bm{\theta}$  and $\bm{\lambda}$ are updated according to
\begin{equation} \label{long_term_update}
\begin{aligned}
\bm{\theta}^{t+1} = (1-\gamma^t) \bm{\theta}^{t} + \gamma^t \bar{\bm{\theta}}^t,\\
\bm{\lambda}^{t+1} = (1-\gamma^t) \bm{\lambda}^{t} + \gamma^t \bar{\bm{\lambda}}^t,
\end{aligned}
\end{equation}
where $\{\gamma^t \in (0,1] \}$ is a decreasing sequence satisfying $\gamma^t \rightarrow 0$, $\sum_t \gamma^t =\infty$, $\sum_t (\gamma^t)^2 < \infty$ and $\lim_{t \rightarrow \infty} \gamma^t/\rho^t = 0$. Note that since $\gamma^t$ satisfies $\gamma^t \in (0,1]$, every point in $\{\bm{\theta}^t \}_{t=1}^{\infty}$ will automatically lie in the feasible region of the reflection phase shifts, i.e.,  $\mathcal{F}$, according to the updating rule in \eqref{long_term_update} and the fact that $\bar{\bm{\theta}}^t \in \mathcal{F}^N$.

\subsection{Deep Unfolding for Gradient Information Extraction}
When solving the long-term problem $\mathcal{P}_L^J$, we need to calculate the gradients of $ \mathbf{w}^J(\tilde{\mathbf{h}}(\bm{\theta},\tilde{\mathbf{H}}), \bm{\lambda})$ w.r.t. the long-term variables $\bm{\theta}$ and $\bm{\lambda}$, where $ \mathbf{w}^J(\tilde{\mathbf{h}}(\bm{\theta},\tilde{\mathbf{H}}), \bm{\lambda})$ is obtained by running the short-term sub-algorithm (i.e., Algorithm \ref{WMMSE_algorithm}) for $J$ iterations. Since obtaining $ \mathbf{w}^J(\tilde{\mathbf{h}}(\bm{\theta},\tilde{\mathbf{H}}), \bm{\lambda})$ involves an iterative algorithm, it is difficult to calculate its gradients in closed-form, which hinders the implementation of the long-term sub-algorithm. To tackle this difficulty, we propose to employ the deep unfolding technique to establish a neural network (NN) representation of $\mathbf{w}^J(\tilde{\mathbf{h}}(\bm{\theta},\tilde{\mathbf{H}}), \bm{\lambda})$ (or equivalently, Algorithm \ref{WMMSE_algorithm}), then we can calculate the gradients $\partial_{\bm{\theta}} \mathbf{w}^J(\tilde{\mathbf{h}}(\bm{\theta},\tilde{\mathbf{H}}), \bm{\lambda})$ and $\partial_{\bm{\lambda}} \mathbf{w}^J(\tilde{\mathbf{h}}(\bm{\theta},\tilde{\mathbf{H}}), \bm{\lambda})$ using the well-known back-propagation (BP) method \cite{rumelhart1986learning}. The BP method is essentially a gradient descent type algorithm, which requires computing the gradient of the loss function (i.e., $\mathbf{w}^J(\tilde{\mathbf{h}}(\bm{\theta},\tilde{\mathbf{H}}), \bm{\lambda})$ in our case) with respect to each weight of the NN by the chain rule, computing the gradient one layer at a time and iterating backward from the last layer to the first layer. Please refer to \cite{rumelhart1986learning} for more details.

Specifically, based on the WMMSE algorithm described in Section \ref{sec_short_term}, we construct the WMMSE network by unfolding the iterations of \eqref{u_update}-\eqref{w_update}, as shown in Fig. \ref{fig_WMMSE_network}. The WMMSE network is defined over a data flow graph, which consists of $J$ layers and each layer  corresponds to one iteration in Algorithm \ref{WMMSE_algorithm}; each layer further contains three sub-layers consisting of the three updating steps in \eqref{u_update}-\eqref{w_update}. The nodes in the graph correspond to different operations in Algorithm \ref{WMMSE_algorithm}, the directed edges represent the data flows between these operations and the long-term variables $\bm{\theta}$ and $\bm{\lambda}$ are viewed as network parameters. Besides, since the purpose of the WMMSE network is to help calculate the gradients $\partial_{\bm{\theta}} \mathbf{w}^J(\tilde{\mathbf{h}}(\bm{\theta},\tilde{\mathbf{H}}), \bm{\lambda})$ and $\partial_{\bm{\lambda}} \mathbf{w}^J(\tilde{\mathbf{h}}(\bm{\theta},\tilde{\mathbf{H}}), \bm{\lambda})$ for the optimization of the long-term variables $(\bm{\theta}, \bm{\lambda})$, it is unnecessary to include activation functions in the WMMSE network \cite{bishop2006pattern}.

\begin{figure*}[t] 
	\setlength{\abovecaptionskip}{-0cm} 
	\setlength{\belowcaptionskip}{-0cm} 
	\centering
	\scalebox{0.5}{\includegraphics{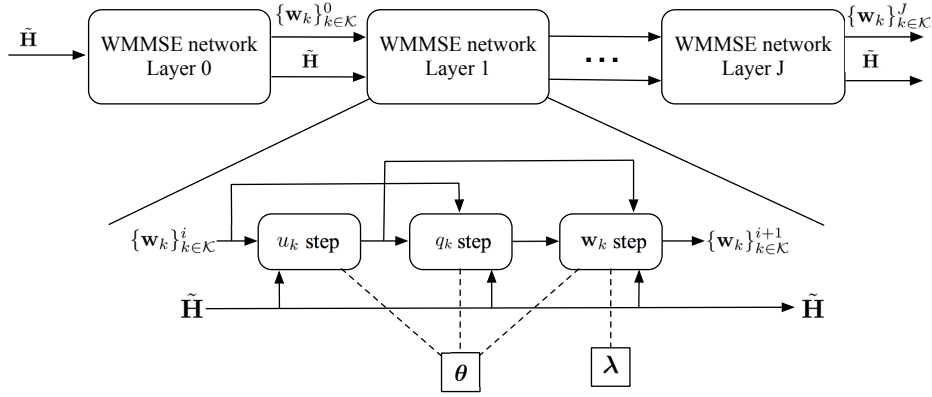}} 
	\caption{Structure of the WMMSE network.}
	\label{fig_WMMSE_network}
\end{figure*}

\subsection{Overall Algorithm with Convergence/Complexity Analysis} \label{Section_proof}
To summarize, the long-term IRS phase shifts and Lagrange multipliers $(\bm{\theta}, \bm{\lambda})$ can be optimized in an offline manner based on the statistics of the channel $\tilde{\mathbf{H}}$, or a data set containing a large number of channel samples. Once the optimized $(\bm{\theta}, \bm{\lambda})$ is obtained, the short-term active beamforming vectors $\{\mathbf{w}_k\}$ for each channel realization can be calculated in an online manner based on the effective channels $\{\tilde{\mathbf{h}}_k \}$ using Algorithm \ref{WMMSE_algorithm}. The proposed PDD-TJAPB algorithm is given in Algorithm \ref{PDD_CSSCA_algorithm} and the corresponding flow chart is shown in Fig. \ref{flow_chart_PDD_TJAPB}. It is important to mention that our proposed algorithm only requires the S-CSI and effective I-CSI to solve problem \eqref{TTS_problem_detail_equi} (i.e., using the S-CSI to generate channel samples $\{\tilde{\mathbf{H}}_j^t \}$ for long-term optimization and the effective I-CSI to design $\{\mathbf{w}_k \}$ for short-term optimization), despite that the full I-CSI is given in \eqref{TTS_problem_detail_equi}. This is practically appealing because obtaining the real-time full CSI will cause unacceptable channel estimation overhead in the considered IRS-aided communication system, especially for large $N$.

\begin{figure*}[t] 
	\setlength{\abovecaptionskip}{-0cm} 
	\setlength{\belowcaptionskip}{-0cm} 
	\centering
	\scalebox{0.42}{\includegraphics{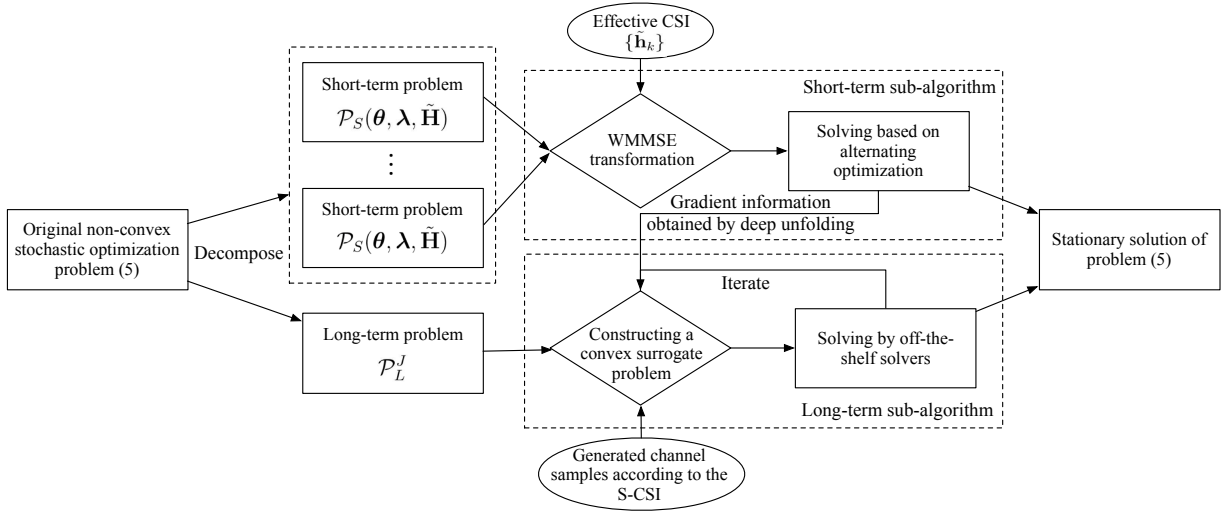}} 
	\caption{Flow chart of the proposed PDD-TJAPB algorithm.}
	\label{flow_chart_PDD_TJAPB}
\end{figure*}

\begin{algorithm}[t] \small
	\caption{{Proposed PDD-TJAPB Algorithm for Solving Problem \eqref{TTS_problem_detail_equi}}} \label{PDD_CSSCA_algorithm}
	\begin{algorithmic}
		\STATE \textbf{Input}: $\{\rho^t\}$, $\{\gamma^t\}$ and $B$. \textbf{Initialize}: $\bm{\theta}^0$, $\bm{\lambda}^0$ and $t=0$. 
		\STATE \textbf{Long-term}: (Passive IRS phase-shifts design):
		\STATE \begin{itemize}
			\item \textbf{Step 1}: Generate a mini-batch $\{\tilde{\mathbf{H}}^t_j\}$ of $B$ channel samples according to the known S-CSI and update the surrogate functions $\bar{f}_k^t(\bm{\theta}, \bm{\lambda})$, $\forall k \in \mathcal{K}$, by \eqref{surrogate_function_update}, where $\{\mathbf{w}_k^J(\tilde{\mathbf{h}}(\bm{\theta}^t,\tilde{\mathbf{H}}_j^t), \bm{\lambda}^t)\}$ are obtained by employing Algorithm \ref{WMMSE_algorithm} to solve problem \eqref{WMMSE_problem} with given generated channel samples and fixed $(\bm{\theta}^{t}, \bm{\lambda}^{t})$.
			\item \textbf{Step 2}: \textbf{If} problem \eqref{long_term_problem_appro1} is feasible, \textbf{then} solve problem \eqref{long_term_problem_appro1} to obtain the optimal $(\bar{\bm{\theta}}^t, \bar{\bm{\lambda}}^{t})$, \textbf{else} solve problem \eqref{long_term_problem_appro} to obtain $(\bar{\bm{\theta}}^t, \bar{\bm{\lambda}}^{t})$, \textbf{end if}.
			\item \textbf{Step 3}: Update $(\bm{\theta}^{t+1}, \bm{\lambda}^{t+1})$ according to \eqref{long_term_update}.
			\item Let $t = t + 1$ and return to \textbf{Step 1}. Repeat the above until convergence.
		\end{itemize}
		\STATE \textbf{Short-term}: (Active beamforming design):
		\STATE \begin{itemize}
			\item Apply Algorithm \ref{WMMSE_algorithm} to solve problem \eqref{WMMSE_problem} with given $\bm{\lambda}$ and $\{\tilde{\mathbf{h}}_k\}$ to obtain $\{\mathbf{w}_k\}$ for each time slot.
		\end{itemize}
	\end{algorithmic}
\end{algorithm}

To show the convergence of Algorithm \ref{PDD_CSSCA_algorithm}, we first prove an essential lemma that establishes several important properties of the surrogate functions $\bar{f}_k^t(\bm{\theta}, \bm{\lambda})$'s, which will be used in the sequel.
\newtheorem{lemma}{Lemma}
\begin{lemma}[Properties of $\bar{f}_k^t(\bm{\theta}, \bm{\lambda})$'s] \label{lemma_1}
	\emph{
	For all $k \in 0 \cup \mathcal{K}$ and $t=1,2,\cdots$, we have}
	
	\emph{
	1) $\bar{f}_k^t(\bm{\theta}, \bm{\lambda})$ is uniformly strongly convex in $\bm{\theta}$ and $\bm{\lambda}$.}

\emph{
2) $\bar{f}_k^t(\bm{\theta}, \bm{\lambda})$ is a Lipschitz continuous function w.r.t. $\bm{\theta}$ and $\bm{\lambda}$. Moreover, $\lim \sup_{t_1, t_2 \rightarrow \infty} |\bar{f}_k^{t_1}(\bm{\theta}, \bm{\lambda}) - \bar{f}_k^{t_2}(\bm{\theta}, \bm{\lambda})| - C\sqrt{\|\bm{\theta}^{t_1} - \bm{\theta}^{t_2}\|^2 + \| \bm{\lambda}^{t_1} -  \bm{\lambda}^{t_2}\|^2 } \leq 0$, $\forall \bm{\theta} \in \mathcal{F}^N$, $\bm{\lambda} \succeq \mathbf{0}$ for some constant $C > 0$. }

\emph{
	3) For any $ \bm{\theta} \in \mathcal{F}^N$ and $\bm{\lambda} \succeq \mathbf{0}$, the function $\bar{f}_k^t(\bm{\theta}, \bm{\lambda})$, its derivative, and its second order derivative are uniformly bounded.}

\emph{
	4) $\lim_{t\rightarrow \infty } |\bar{f}_0^t(\bm{\theta}^t, \bm{\lambda}^t) - \mathbb{E} \{\sum\nolimits_{k\in \mathcal{K}} \|\mathbf{w}_k^J(\tilde{\mathbf{h}}(\bm{\theta}^t, \tilde{\mathbf{H}}),  \bm{\lambda}^t)  \|^2  \}| \allowbreak = 0$, $\lim_{t\rightarrow \infty } \| \mathbf{f}_{\mathbf{w},0}^t - \nabla_{\bm{\theta}}  \mathbb{E} \{\sum\nolimits_{k\in \mathcal{K}} \|\mathbf{w}_k^J(\tilde{\mathbf{h}}(\bm{\theta}^t, \tilde{\mathbf{H}}),  \bm{\lambda}^t)  \|^2  \}  \| \allowbreak = 0$ and $\lim_{t\rightarrow \infty } \|\mathbf{f}_{\bm{\lambda},0}^t - \nabla_{\bm{\lambda}}  \mathbb{E} \{\sum\nolimits_{k\in \mathcal{K}} \|\mathbf{w}_k^J(\tilde{\mathbf{h}}(\bm{\theta}^t,\tilde{\mathbf{H}}), \allowbreak \bm{\lambda}^t)  \|^2  \} \| = 0$, while for $k \in \mathcal{K}$, $\lim_{t\rightarrow \infty } |\bar{f}_k^t(\bm{\theta}^t, \bm{\lambda}^t) - (R_k - \mathbb{E} \{ r_k(\bm{\theta}^t,\{ \mathbf{w}_k^J(\tilde{\mathbf{h}}(\bm{\theta}^t,\tilde{\mathbf{H}}), \bm{\lambda}^t)  \},\tilde{\mathbf{H}} ) \})| = 0$, $\lim_{t\rightarrow \infty } \|\mathbf{f}_{\bm{\theta},k}^t + \mathbf{f}_{\mathbf{w},k}^t - \nabla_{\bm{\theta}}(R_k - \mathbb{E} \{ r_k(\bm{\theta}^t,\{ \mathbf{w}_k^J(\tilde{\mathbf{h}}(\bm{\theta}^t,\tilde{\mathbf{H}}), \bm{\lambda}^t)  \},\tilde{\mathbf{H}} ) \} ) \| = 0$ and $\lim_{t\rightarrow \infty } \|\mathbf{f}_{\bm{\lambda},k}^t - \nabla_{\bm{\lambda}}(R_k - \mathbb{E} \{ r_k(\bm{\theta}^t,\{ \mathbf{w}_k^J(\tilde{\mathbf{h}}(\bm{\theta}^t, \tilde{\mathbf{H}}),  \bm{\lambda}^t)  \},\allowbreak \tilde{\mathbf{H}} ) \} ) \| = 0$.}

\emph{
5) Consider a subsequence $\{\bm{\theta}^{t_j}, \bm{\lambda}^{t_j} \}_{j=1}^{\infty}$ converging to a limit point $(\bm{\theta}^*, \bm{\lambda^*})$. There exist uniformly differentiable
functions $\{\hat{f}_k(\bm{\theta}, \bm{\lambda})\}$ such that
\begin{equation}
\lim_{j \rightarrow \infty} \bar{f}_k^{t_j}(\bm{\theta}, \bm{\lambda}) = \hat{f}_k(\bm{\theta}, \bm{\lambda}), \;\forall \bm{\theta} \in \mathcal{F}^N, \bm{\lambda} \succeq \mathbf{0}, 
\end{equation}
almost surely.
}
\end{lemma}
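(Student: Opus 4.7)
The plan is to dispatch the five claims in order, with Parts~1--3 following from direct inspection of the definition in \eqref{surrogate_function_update} together with mild boundedness of the channel samples and short-term iterates, Part~4 being the main technical step relying on a stochastic-approximation argument, and Part~5 following from Part~4 combined with a compactness argument. For Part~1, $\bar{f}_k^t(\bm{\theta},\bm{\lambda})$ is an affine function of $(\bm{\theta},\bm{\lambda})$ plus the quadratic regularizer $\tau_k(\|\bm{\theta}-\bm{\theta}^t\|^2+\|\bm{\lambda}-\bm{\lambda}^t\|^2)$ with $\tau_k>0$, so strong convexity with modulus $2\tau_k$ is immediate and uniform in $t$.

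For Parts~2 and 3, I would first establish uniform-in-$t$ boundedness of the recursively updated coefficients $f_k^t$, $\mathbf{f}_{\bm{\theta},k}^t$, $\mathbf{f}_{\mathbf{w},k}^t$, $\mathbf{f}_{\bm{\lambda},k}^t$. Since each recursion is a convex combination of the previous value and a new mini-batch estimate, it suffices to bound the per-sample terms uniformly. The per-sample terms reduce to: (i) the short-term output $\mathbf{w}^J(\tilde{\mathbf{h}}(\bm{\theta},\tilde{\mathbf{H}}),\bm{\lambda})$ and its derivatives w.r.t.\ $\bm{\theta}$ and $\bm{\lambda}$, which are bounded because Algorithm~\ref{WMMSE_algorithm} runs for only $J$ iterations starting from a smooth initialization, making the output a composition of finitely many smooth maps evaluated on the compact domain $\mathcal{F}^N\times\{\bm{\lambda}:\|\bm{\lambda}\|\le\lambda_{\max}\}$; and (ii) $r_k$ and its derivatives, which are bounded whenever the generating distribution has sufficiently decaying moments. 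Lipschitz continuity of $\bar{f}_k^t$ then follows from its quadratic-plus-linear form with uniformly bounded coefficients, the Hessian bound in Part~3 is just $2\tau_k\mathbf{I}$, and the limsup inequality in Part~2 is obtained by splitting $|\bar{f}_k^{t_1}-\bar{f}_k^{t_2}|$ at a common anchor and controlling the shift of the centers $(\bm{\theta}^{t_i},\bm{\lambda}^{t_i})$ by the uniform Lipschitz constant.

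The heart of the proof, and the main obstacle, is Part~4. Each of the six recursions has the Robbins--Monro form
\begin{equation}
v^{t+1}=(1-\rho^t)v^t+\rho^t\bigl(F(\bm{\theta}^t,\bm{\lambda}^t)+\delta^t\bigr),
\end{equation}
where $\delta^t$ is a zero-mean mini-batch noise with variance bounded uniformly by Part~3, and $F(\bm{\theta}^t,\bm{\lambda}^t)$ is the target conditional expectation. The stepsize assumptions $\rho^t\!\to\!0$, $\sum_t\rho^t=\infty$, $\sum_t(\rho^t)^2<\infty$ give the standard stochastic-approximation regime. The twist is that $\bm{\theta}^t,\bm{\lambda}^t$ themselves drift with $t$; however, the drift increments are $O(\gamma^t)$ by the update~\eqref{long_term_update} and the two-timescale separation $\gamma^t/\rho^t\to 0$ ensures the drift is negligible relative to the averaging scale. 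Combining this with the Lipschitz continuity of $F$ (from Part~2 and Part~3), a standard two-timescale stochastic-approximation argument, e.g.\ Lemma~1 of \cite{Liu_CSSCA_2019}, yields $v^t-F(\bm{\theta}^t,\bm{\lambda}^t)\to0$ almost surely. The delicate point is the interchange of differentiation and expectation needed to identify the target $F$ for the gradient-type iterates; I would justify this via the uniform boundedness of $\partial_{\bm{\theta}}\mathbf{w}^J$ and $\partial_{\bm{\lambda}}\mathbf{w}^J$ from Part~3 and dominated convergence.

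For Part~5, the uniform bounds on function values, first derivatives and second derivatives from Part~3 make the family $\{\bar{f}_k^t\}$ equi-$C^{1,1}$ on the compact domain. Along any convergent subsequence $(\bm{\theta}^{t_j},\bm{\lambda}^{t_j})\to(\bm{\theta}^*,\bm{\lambda}^*)$, the coefficients $f_k^{t_j}$, $\mathbf{f}_{\bm{\theta},k}^{t_j}$, $\mathbf{f}_{\mathbf{w},k}^{t_j}$, $\mathbf{f}_{\bm{\lambda},k}^{t_j}$ converge almost surely by Part~4 (to the corresponding expectations evaluated at the limit). An Arzel\`a--Ascoli extraction then yields $C^1$ convergence of $\bar{f}_k^{t_j}$ to a $C^{1,1}$ limit $\hat{f}_k(\bm{\theta},\bm{\lambda})$ which, being determined pointwise by the limits of the coefficients together with the fixed quadratic term $\tau_k(\|\bm{\theta}-\bm{\theta}^*\|^2+\|\bm{\lambda}-\bm{\lambda}^*\|^2)$, is unique and hence the convergence holds along the full subsequence, not just a further one, proving the claim almost surely.
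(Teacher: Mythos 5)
Your proposal is correct and follows essentially the same route as the paper: Parts 1--3 by direct inspection of the surrogate's linear-plus-quadratic form with uniformly bounded coefficients (boundedness coming from the compact domain, the finite number $J$ of smooth WMMSE iterations, and the boundedness of $r_k$ and its derivatives), Part 4 by the standard two-timescale stochastic-approximation lemma for recursions with slowly drifting parameters (the paper cites Ruszczynski's lemma where you cite the CSSCA framework, but it is the same machinery with the same key conditions $\sum_t(\rho^t)^2<\infty$ and $\gamma^t/\rho^t\to 0$), and Part 5 as a consequence of Part 4 via equicontinuity. The paper's own proof is far terser --- it defers Parts 4--5 entirely to references --- so your write-up is a faithful expansion rather than a different argument.
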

\begin{proof}
	Please refer to Appendix \ref{appendix_c}. 
\end{proof}
Besides, we define the following Slater's condition for $\mathcal{P}_L^J$. 
\begin{definition}[Slater's condition for $\mathcal{P}_L^J$] \label{slater_condition}
	\emph{
Given a subsequence $\{\bm{\theta}^{t_j}, \bm{\lambda}^{t_j} \}_{j=1}^{\infty}$ converging to a limit point $(\bm{\theta}^*, \bm{\lambda^*})$ and let $\hat{f}_k(\bm{\theta}, \bm{\lambda})$'s be the converged surrogate functions as defined in Lemma \ref{lemma_1} property 5. We say that the Slater's condition is satisfied at $(\bm{\theta}^*, \bm{\lambda}^*)$ if there exists $(\bm{\theta}^*, \bm{\lambda}^*) \in \textrm{relint}( \mathcal{F}^N \times \mathbb{R}^{K++})$ such that
\begin{equation}
\hat{f}_k(\bm{\theta}^*, \bm{\lambda}^*) <0, \;\forall k \in \mathcal{K}.
\end{equation}
}
\end{definition}
With Lemma \ref{lemma_1} and Definition \ref{slater_condition}, we have the following main result. 
\begin{theorem}[Convergence of Algorithm \ref{PDD_CSSCA_algorithm}] \label{theorem_3}
	\emph{
Suppose that the initial point $(\bm{\theta}^0 \in \mathcal{F}^N, \bm{\lambda}^0 \in \mathbb{R}^{K++})$ is a feasible point of $\mathcal{P}_L^J$, i.e., $\min_{k\in \mathcal{K}}  \mathbb{E} \{  r_k(\bm{\theta}^0,\{ \mathbf{w}_k^J(\tilde{\mathbf{h}}(\bm{\theta}^0,\tilde{\mathbf{H}}), \bm{\lambda}^0)  \},\tilde{\mathbf{H}} ) \} \geq R_k$.  Let $\{\bm{\theta}^t, \bm{\lambda}^t \}_{t=1}^{\infty}$ denote the values iteratively generated by Algorithm \ref{PDD_CSSCA_algorithm} with a sufficiently small initial step size $\gamma^0$, then every limit point $(\bm{\theta}^*, \bm{\lambda}^*)$ of $\{\bm{\theta}^t, \bm{\lambda}^t \}_{t=1}^{\infty}$ satisfying the LIRC in Theorem \ref{theorem_2} and the Slater's condition, almost surely satisfies the stationary conditions in \eqref{stationary_1}, \eqref{stationary_2} and \eqref{stationary_3} up to an error of $O(e^J(\bm{\theta}^*, \bm{\lambda}^*))$, which diminishes to zero as $\lim_{J \rightarrow \infty}e^J(\bm{\theta}^*, \bm{\lambda}^*) = 0$.
}
\end{theorem}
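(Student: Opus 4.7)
The plan is to reduce the theorem to two pieces: a stochastic approximation / CSSCA-style convergence argument for the long-term iterates, followed by a direct appeal to the TTS primal-dual decomposition in Theorem \ref{theorem_2}. The first piece must show that every limit point $(\bm{\theta}^*, \bm{\lambda}^*)$ of $\{\bm{\theta}^t, \bm{\lambda}^t\}$ almost surely satisfies the KKT system \eqref{KKT_1}--\eqref{KKT_4} of the auxiliary problem $\mathcal{P}_L^J$; once that is in hand, Theorem \ref{theorem_2} immediately produces the stationarity of problem \eqref{TTS_problem_detail_equi} up to the prescribed $O(e^J(\bm{\theta}^*, \bm{\lambda}^*))$ error, since the short-term sub-algorithm supplies continuously differentiable short-term iterates (the differentiability/continuity hypothesis of Theorem \ref{theorem_2} being guaranteed by Algorithm \ref{WMMSE_algorithm} with a generic initialization, as noted after \eqref{w_update}).

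For the CSSCA piece, I would first use Lemma \ref{lemma_1} to set up the ingredients: properties 1--3 give uniform strong convexity, uniform Lipschitz continuity, and uniform boundedness of $\bar{f}_k^t$ and its derivatives, so the surrogate problem \eqref{long_term_problem_appro1}/\eqref{long_term_problem_appro} admits unique optimal solutions $(\bar{\bm{\theta}}^t, \bar{\bm{\lambda}}^t)$ that depend continuously on $(\bm{\theta}^t, \bm{\lambda}^t)$, while property 4 ensures consistency of the recursive approximations with their target expectations and gradients. I then invoke the standard two-time-scale step-size conditions ($\gamma^t/\rho^t \to 0$, $\sum \gamma^t = \infty$, $\sum (\gamma^t)^2 < \infty$ etc.) to argue, via a Robbins-Siegmund / supermartingale-type argument combined with the descent property of the surrogate updates, that $\|(\bm{\theta}^{t+1}, \bm{\lambda}^{t+1}) - (\bm{\theta}^t, \bm{\lambda}^t)\| \to 0$ almost surely and that the iterates admit limit points. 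Feasibility for the long-term problem is preserved asymptotically because the algorithm either solves \eqref{long_term_problem_appro1} (which enforces $\bar{f}_k^t \leq 0$) or, when this is infeasible, solves \eqref{long_term_problem_appro} to drive the infeasibility measure $\alpha$ toward zero; Slater's condition at the limit point guarantees that the second branch occurs only finitely often in a neighborhood of $(\bm{\theta}^*, \bm{\lambda}^*)$, so the limit satisfies $\hat{f}_k(\bm{\theta}^*, \bm{\lambda}^*) \le 0$.

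Combining this with property 5 of Lemma \ref{lemma_1}, I would extract a subsequence along which the $\bar{f}_k^{t_j}$ converge to differentiable limits $\hat{f}_k$ that agree in value and gradient (by property 4) with the true long-term objective/constraint functions at $(\bm{\theta}^*, \bm{\lambda}^*)$. Writing down the KKT conditions of the convex surrogate problem \eqref{long_term_problem_appro1} at iteration $t_j$ and passing to the limit then yields \eqref{KKT_1}, \eqref{KKT_2}, and \eqref{KKT_4}; the complementary slackness \eqref{KKT_3} follows because Slater's condition provides strictly feasible points for the limiting surrogate and the LIRC assumed in Theorem \ref{theorem_2} pins down the dual multipliers uniquely, so the surrogate-multiplier sequence converges and the slackness equality carries through. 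Feeding $(\bm{\theta}^*, \bm{\lambda}^*)$ together with $\bm{\varUpsilon}^* = \{\mathbf{w}^J(\tilde{\mathbf{h}}(\bm{\theta}^*, \tilde{\mathbf{H}}), \bm{\lambda}^*)\}$ into Theorem \ref{theorem_2} completes the proof, with the stated residual $O(e^J(\bm{\theta}^*, \bm{\lambda}^*)) \to 0$ as $J \to \infty$.

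The main obstacle I anticipate is the rigorous handling of the stochastic mini-batch approximations in \emph{both} the objective and the coupled gradients $\partial_{\bm{\theta}} \mathbf{w}^J$, $\partial_{\bm{\lambda}} \mathbf{w}^J$ obtained by back-propagation through the unfolded WMMSE network. Unlike standard CSSCA, here the recursions track composite quantities of the form $\partial_{\bm{\theta}} \mathbf{w}^J \,\partial_{\mathbf{w}}(\cdot)$, so proving Lemma \ref{lemma_1} property 4 (the almost-sure consistency claim that drives everything else) requires uniform integrability and Lipschitz bounds on these chain-rule derivatives with respect to the long-term variables. The rest of the argument is largely a careful adaptation of the CSSCA analysis in \cite{Liu_CSSCA_2019}, but this consistency step and the justification that the infeasibility branch \eqref{long_term_problem_appro} is inactive near the limit are the delicate points that would occupy most of the appendix.
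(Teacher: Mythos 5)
Your proposal follows essentially the same route as the paper's proof: use Lemma \ref{lemma_1} together with the CSSCA convergence analysis of \cite{Liu_CSSCA_2019} to show that every limit point (under the Slater's condition and a sufficiently small initial step size) is almost surely a stationary point of $\mathcal{P}_L^J$, and then invoke Theorem \ref{theorem_2} to transfer stationarity to problem \eqref{TTS_problem_detail_equi} up to the $O(e^J(\bm{\theta}^*,\bm{\lambda}^*))$ error. The paper's actual proof is just a two-sentence version of this that delegates the entire stochastic-approximation argument (which you spell out) to the cited CSSCA reference, so your additional detail is consistent elaboration rather than a different approach.
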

\begin{proof}
	From Lemma \ref{lemma_1} and the convergence analysis of the CSSCA framework in \cite{Liu_CSSCA_2019}, we can see that starting from a feasible initial point, every limit point $(\bm{\theta}^*, \bm{\lambda}^*)$ generated by the long-term optimization in Algorithm \ref{PDD_CSSCA_algorithm} is a stationary point of the long-term subproblem $\mathcal{P}_L^J$ almost surely, providing that the initial step size $\gamma^0$ is sufficiently small, and the Slater's condition is satisfied for $(\bm{\theta}^*, \bm{\lambda}^*)$. Then, it follows from Theorem \ref{theorem_2} that the solution $(\bm{\theta}^*,\bm{\varUpsilon}^* = \{ \mathbf{w}_k^J(\tilde{\mathbf{h}}(\bm{\theta}^*,\tilde{\mathbf{H}}), \bm{\lambda}^*)  ,\forall \tilde{\mathbf{H}} \} )$ obtained by Algorithm \ref{PDD_CSSCA_algorithm} is a stationary solution of the original problem \eqref{TTS_problem_detail_equi}, up to the error $O(e^J(\bm{\theta}^*, \bm{\lambda}^*))$ that diminishes to zero as $J$ goes to infinity.
\end{proof}

In Theorem \ref{theorem_3}, we have assumed that the initial step size $\gamma^0$ should be sufficiently small, this is to ensure that the whole step-size sequence $\{\gamma^t \}$ is sufficiently small since $\{\gamma^t\}$ is decreasing as specified in Step 3 of Section \ref{sec_long_term}. Note that this assumption is required to make it easier to handle the randomness caused by the random channels and for tractable and rigorous convergence analysis. In essence, it is a sufficient but not necessary condition to guarantee the convergence of Algorithm \ref{PDD_CSSCA_algorithm}, and in practice, we may prefer choosing a relatively large initial step size to achieve a faster initial convergence speed. Besides, in Theorem \ref{theorem_3}, we have assumed that a feasible initial point is available, which is also a sufficient (but not necessary) condition for convergence. Due to the feasible update in \eqref{long_term_problem_appro}, Algorithm \ref{PDD_CSSCA_algorithm} can still converge to a stationary point of problem \eqref{TTS_problem_detail_equi} with high probability, even when the initial point is infeasible \cite{Liu_CSSCA_2019}.

In the following, we analyze the computational complexity of the proposed PDD-TJAPB algorithm. First, it is observed that the complexity of the long-term IRS phase-shift optimization in Algorithm \ref{PDD_CSSCA_algorithm} is mainly due to computing $\{\mathbf{w}_k^J(\bm{\theta}^t, \bm{\lambda}^t,  \tilde{\mathbf{H}}_j^t)\}$ for the generated channel samples, calculating the gradients $\{\partial_{\bm{\theta}} \mathbf{w}^J(\tilde{\mathbf{h}}(\bm{\theta}^t,\tilde{\mathbf{H}}_j^t), \bm{\lambda}^t) \}$ and $\{ \partial_{\bm{\lambda}} \mathbf{w}^J(\tilde{\mathbf{h}}(\bm{\theta}^t,\tilde{\mathbf{H}}_j^t), \bm{\lambda}^t) \}$ through BP, and solving problem \eqref{long_term_problem_appro1} or \eqref{long_term_problem_appro}. For each $j \in \{1,\cdots,B \}$, Algorithm \ref{WMMSE_algorithm} is applied to obtain the corresponding active beamforming vectors, whose complexity is dominated by the matrix inversion operation required for updating $\{\mathbf{w}_k \}$ in \eqref{w_update}, which is $\mathcal{O}(JKM^3)$. The BP has in general the same computational complexity as the forward pass of the WMMSE network \cite{garcez2012neural}, therefore its complexity can also be expressed as $\mathcal{O}(JKM^3)$.
Besides, both problems \eqref{long_term_problem_appro1} and \eqref{long_term_problem_appro} can be expressed as a second-order cone program (SOCP) problem and the complexity of solving them using a standard interior-point method can be shown to be $\mathcal{O}(K^{0.5} (N+K)^3)$, where the basic elements of complexity analysis in \cite{Wang2014} are used. Accordingly, the complexity for updating the long-term IRS phase-shift matrix $\bm{\theta}$ is $\mathcal{O}(I(2BJKM^3 + K^{0.5} (N+K)^3))$, where $I$ denotes the iteration number required by Algorithm \ref{PDD_CSSCA_algorithm}.  Note that Algorithm \ref{PDD_CSSCA_algorithm} for long-term variable optimization only needs to be run once every $T_s \gg 1$ time slots. Second, for the short-term active beamforming optimization, the complexity per time slot can be expressed as $\mathcal{O}(JKM^3)$.

\begin{remark}
	\emph{The proposed PDD-TJAPB algorithm and the CSSCA framework in \cite{Liu_CSSCA_2019} can both handle stochastic optimization problems, where the objective and constraint functions are non-convex and involve expectations over random states. However, the difference is that the latter is designed to solve single-timescale stochastic optimization problems with only long-term optimization variables and cannot be directly applied to solve the formulated TTS beamforming optimization problem \eqref{TTS_problem_detail_equi} with both long-term and short-term variables. From this perspective, the proposed PDD-TJAPB algorithm is more powerful as a novel TTS primal-dual decomposition method is developed to decouple the long-term and short-term optimization variables and the deep unfolding technique is employed for gradient information extraction. To the best of our knowledge, the proposed PDD-TJAPB is the first efficient and provable convergent algorithm to solve a TTS stochastic optimization problem with both long-term and short-term variables coupled in non-convex stochastic constraints.
	}
\end{remark}

\begin{remark}
\emph{Note that for ease of implementation and cost reduction, the phase shift at each reflecting element of the IRS is usually restricted to a set of discrete values \cite{Wu2019Discrete, zhao2019intelligent}. In this case, we can first run the proposed PDD-TJAPB algorithm with continuous phase shifts; then, by quantizing the converged continuous phase shifts $\bm{\theta}$ to discrete values $\bm{\theta}^d$ (i.e., projecting $\bm{\theta}$ to $\mathcal{F}_d \triangleq  \{0,\frac{2\pi}{L},\cdots, \frac{2\pi(L-1)}{L}\}$, where $L=2^Q$  and $Q$ denotes the number of control bits for phase-shifting per IRS element), we rerun the PDD-TJAPB algorithm with fixed $\bm{\theta}^d$ to obtain the updated long-term Lagrange multipliers $\bm{\lambda}^d$; finally, we can follow the short-term part of the PDD-TJAPB algorithm with given $\bm{\theta}^d$ and $\bm{\lambda}^d$ to obtain the active beamforming vectors.}
\end{remark}

\section{Simulation Results} \label{section_simulation_results}
In this section, we provide numerical results to evaluate the performance of the proposed algorithm and draw useful insights. In our simulations, the distance-dependent path loss is modeled as $L = C_0\left(\frac{d_{\textrm{link}}}{D_0}\right)^{-\alpha}$, where $C_0$ is the path loss at the reference distance $D_0 = 1$ meter (m), $d_{\textrm{link}}$ represents the individual link distance and $\alpha$ denotes the path-loss exponent. Let $\alpha_{Au}$, $\alpha_{AI}$ and $\alpha_{Iu}$ denote the path-loss exponents of the AP-user, AP-IRS and IRS-user links, respectively, and let $L_{Au}$, $L_{AI}$ and $L_{Iu}$ denote the corresponding channel gains due to path loss. We set $\alpha_{Au} = 3.6$, $\alpha_{AI} = 2.2$ and $\alpha_{Iu} = 2.2$ \cite{Zhao2020intelligent}, i.e., the path-loss exponent of the AP-user link is larger than those of the AP-IRS and IRS-user links, to model the scenario that the users suffer from severe signal attenuation in the AP-user direct link.  
Besides, a three-dimensional coordinate system is considered, where the AP (equipped with a uniform linear array (ULA)) is located on the $x$-axis with antenna spacing $d_A = \lambda_c/2$ ($\lambda_c$ denotes the wavelength), while the IRS (equipped with a uniform planar array (UPA)) is located on the $y$-$z$ plane with reflecting element spacing $d_I = \lambda_c/8$. We set $N = N_yN_z$, where $N_y$ and $N_z$ denote the numbers of reflecting elements along the $y$-axis and $z$-axis, respectively. For the purpose of exposition, we fix $N_y = 4$. The reference antenna/element at the AP/IRS are located at $(2\;\textrm{m}, 0, 0)$ and $(0, 50\;\textrm{m}, 3\;\textrm{m})$, respectively. We assume that there are three users in the considered IRS-aided system, denoted by $U_k$'s, $k \in \{1,2,3\}$ and their locations are shown in Fig. \ref{fig:multiuser_setup}, i.e., the users are randomly located on a semicircle centered at $(2\;\textrm{m},50\;\textrm{m},0)$ with radius 3m. 

\begin{figure}[t] 
	\setlength{\abovecaptionskip}{-0.2cm} 
	\setlength{\belowcaptionskip}{-0.2cm} 
	\centering
	\scalebox{0.5}{\includegraphics{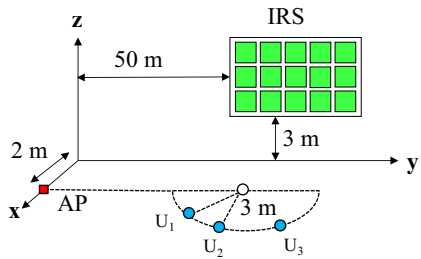}}
	\caption{Simulation setup of the multiuser system.}
	\label{fig:multiuser_setup} 
\end{figure}

We employ the general clustered delay line (CDL) models (in our simulations, the CDL-D model is used) for the AP-IRS, AP-user and IRS-uesr channels \cite[Table 7.7.1]{3GPPCDLD} and  flat fading is assumed. Specifically, the AP-IRS channel is modeled as 
\begin{equation}
\mathbf{G} = \sqrt{L_{AI}}\sum\limits_{q = 1}^Q {p(q) \mathbf{a}_T^q} {(\mathbf{a}_I^q)^H}, 
\end{equation}
where $Q$ denotes the total number of clusters; $p(q) \sim \mathcal{CN}(0,\sigma_q^2) $ denotes the complex gain of the $q$-th cluster\footnote{Note that ${p(1) \mathbf{a}_T^1} {(\mathbf{a}_I^1)^H}$ denotes the line-of-sight (LoS) component and $\sum\nolimits_{q = 2}^Q {p(q) \mathbf{a}_T^q} {(\mathbf{a}_I^q)^H}$ represents the non-LoS (NLoS)  component.};  $\mathbf{a}_T^q \in {\mathbb{C}^{M \times 1}}$ is the array response at the AP corresponds to the $q$-th cluster and ${[\mathbf{a}_T^q]_m} = {e^{ - \jmath 2 \pi (m - 1)d_A \sin {\theta_q^D} /\lambda_c }} ,\forall m$, where $\theta_q^D$ denotes the angle-of-departure (AoD); $\mathbf{a}_I^q \in {\mathbb{C}^{N \times 1}}$ is the array response of the $q$-th cluster at the IRS, with ${[\mathbf{a}_I^q]_n} = e^{ - \jmath 2\pi d_I(\lfloor n/{N_y} \rfloor \sin \varphi_q^A \sin {\theta_q^A} + (n - \lfloor n/{N_y} \rfloor N_y){\sin \varphi_q^A \cos {\theta_q^A})/\lambda_c}},\forall n$, where ${\theta_q^A}$ and $ \varphi_q^A$ denote the azimulth angle-of-arrival (AoA) and elevation AoA, respectively. In addition, we define the LoS power ratio $\beta_{AI}$ of the AP-IRS channel  as the power of the LoS component over the total channel power, which is given by $\beta_{AI} = \frac{|{p(1) \mathbf{a}_T^1} {(\mathbf{a}_I^1)^H}|^2}{\sum\nolimits_{q = 1}^Q |{p(q) \mathbf{a}_T^q} {(\mathbf{a}_I^q)^H}|^2}$. The AP-user and IRS-user channels are generated by following the similar procedure and the LoS power ratios of these two links are denoted by $\beta_{Au}$ and $\beta_{Iu}$, respectively. Other system parameters are set as follows unless otherwise specified: carrier frequency $f_c = 5$ GHz, $\sigma_k^2 = -80$ dBm,  $C_0 = -30$ dB, $N=40$, $M=6$, $K=3$, $B = 10$, $J=10$, $T_s=2000$, $R_k =R = 4$  bits/s/Hz, $\forall k$, $\tau_k =\tau = 0.01$, $\forall k$, $\rho^t = \frac{2}{{(2 + t)}^{0.9}}$, $\gamma^t = \frac{2}{2 + t}$ and the CDL-D channel parameters are set according to \cite[Table 7.7.1]{3GPPCDLD}. Note that the specific values for the coefficients in $\{\rho^t, \gamma^t\}$ such as $0.9$ and $2$ are tuned to achieve a good empirical convergence speed. For the initialization of the proposed PDD-TJAPB algorithm, the reflection phase shifts (i.e., $\bm{\theta}^0$) are independently and uniformly selected within $[0,2\pi)$, while the long-term Lagrange multipliers (i.e., $\bm{\lambda}^0$) are simply initialized as $\bm{\lambda}^0 = \mathbf{1}$.\footnote{We note that this simple initialization method cannot guarantee a feasible initial point in general; however, it works well in all of our simulations. Besides, theoretically, it is possible that the PDD-TJAPB algorithm may get stuck in an undesired point if the initial point is close to this undesired point. In this case, we can run the PDD-TJAPB algorithm with multiple random initial points, and it is very likely that the algorithm with one of the initial points will converge to a ``good'' stationary point of problem \eqref{TTS_problem_detail_equi}.} All the results are averaged over $2000$ independent channel realizations.

For comparison, we consider three benchmark schemes: 1) the fast-timescale scheme which employs the I-CSI and adopts the alternating optimization (AO) algorithm in \cite{Wu2018_journal} under short-term rate constraints of the users (with the same rate value as that for the case of long-term rate constraint for each user), 2) the TTS scheme in \cite{zhao2019intelligent}, where the weighted sum-rate maximization problem (the user weights are set to $1$) is solved for IRS phase-shift optimization and the active beamforming design problem in each time slot (with fixed IRS phase shifts) is solved by transforming it into an SOCP problem as in \cite{Bengtsson2001} that is then solved by CVX \cite{CVX} (this scheme is abbreviated as TTS-WSRMax in the following), and 3) the random phase-shift scheme where the phase shifts at the IRS are randomly generated at each time slot for introducing artificial channel fading to enhance the multiuser channel diversity, while the active beamforming is designed by adopting the same method as in the TTS-WSRMax scheme with fixed IRS phase shifts.

Prior to performance comparison, we first illustrate  the convergence behavior of the proposed PDD-TJAPB algorithm (i.e., Algorithm \ref{PDD_CSSCA_algorithm}). In Fig. \ref{fig:Convergence_behavior_compare1} (a) and (b), we plot the average transmit power at the AP and the average rate of the worst user (that has the lowest achievable average rate) versus the number of iterations, respectively. As can be seen, Algorithm \ref{PDD_CSSCA_algorithm} can converge within $200$ iterations. Due to the stochastic nature of the considered problem and the proposed algorithm, the power and rate curves in Fig. \ref{fig:Convergence_behavior_compare1} are not necessarily monotonic. However, from the overall trend, we observe that the average rate of the worst user gradually increases and finally approaches the  long-term average rate target $R=4$ bits/s/Hz. Besides, in Table \ref{tab:2}, we compare the channel estimation overhead and reflection coefficient signaling overhead (both measured by the number of coefficients) required by the TTS and conventional I-CSI-based schemes under a typical simulation setup as specified before (i.e., $M=6$, $K=3$, $N=40$ and $T_s=2000$). We assume that there is no prior knowledge of the channel and the popular least-square channel estimation method in \cite{zheng2019intelligent, you2019progressive, wang2019channel} is used, thus the channel estimation overhead is proportional to the number of channel coefficients.  It can be observed that by adopting the TTS scheme, both overheads can be significantly reduced as compared to the I-CSI-based scheme.

\begin{figure}[t]
	\setlength{\abovecaptionskip}{-0.2cm} 
	\setlength{\belowcaptionskip}{-0.2cm} 
	\centering
	\scalebox{0.38}{\includegraphics{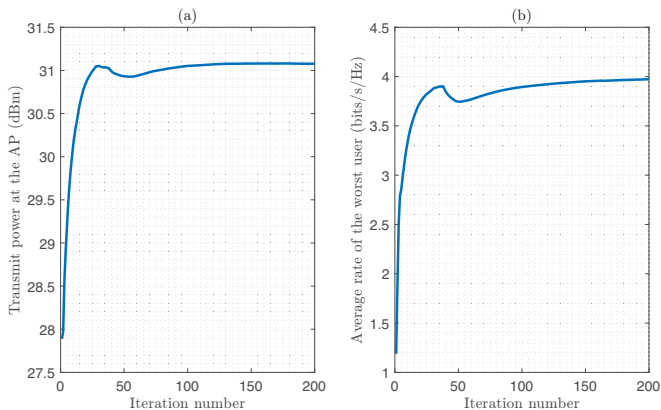}}
	\caption{Convergence behavior of Algorithm \ref{PDD_CSSCA_algorithm}.}
	\label{fig:Convergence_behavior_compare1}
\end{figure}

\begin{table*}[htbp]
	\centering
	\caption{Channel estimation overhead and reflection coefficient signaling overhead comparison}
	\label{tab:2}                        
	\begin{tabular}{ccccc} 
		\hline\noalign{\smallskip} 
		& TTS scheme & I-CSI-based scheme \\
		\noalign{\smallskip}\hline\noalign{\smallskip}
		Channel estimation overhead (per time slot) & 18 & 378 \\
	Reflection coefficient signaling overhead (per time interval) & 40 & 80000 \\
		\noalign{\smallskip}\hline
	\end{tabular}
\end{table*}

\begin{figure}[t] 
	\setlength{\abovecaptionskip}{-0.2cm} 
	\setlength{\belowcaptionskip}{-0.2cm} 
	\centering
	\scalebox{0.44}{\includegraphics{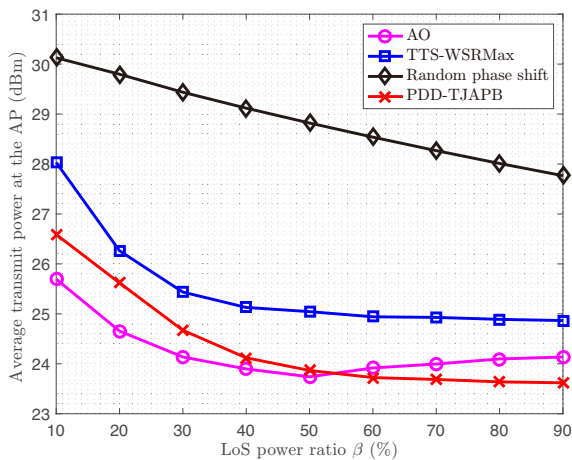}}
	\caption{Average transmit power versus LoS power ratio, $\beta$.}
	\label{fig:LoS_scale_factor}
\end{figure}

\subsection{Impact of LoS Power Ratio, $\beta$} In Fig. \ref{fig:LoS_scale_factor}, we investigate the average transmit power achieved by PDD-TJAPB versus the LoS power ratio of the AP-IRS and IRS-user links, where we set $\beta_{AI} = \beta_{Iu}   = \beta$ and $\beta_{Au} = 0$. First, we observe that the required average transmit power of the S-CSI-based schemes (i.e., TTS-WSRMax and PDD-TJAPB) and the random phase-shift scheme decreases with the increasing of $\beta$, and PDD-TJAPB achieves the best performance among them. Note that in these three schemes, the IRS phase shifts and the active beamforming vectors are not jointly designed in each time slot, therefore their ability to suppress the instantaneous multiuser interference is limited. When $\beta$ increases, their performance improves mainly because the multiuser interference becomes less random. For the S-CSI-based schemes, another important reason is that the AP-IRS-user link becomes more deterministic as $\beta$ increases, thus rendering the passive beamforming by TTS-WSRMax or PDD-TJAPB to be more effective. Second, it can be seen that the performance gain of PDD-TJAPB over TTS-WSRMax gradually enlarges as $\beta$ increases. This is reasonable since increasing $\beta$ will reduce the rank of $\mathbf{G}$, which further leads to decreased spatial multiplexing gain. Thus, more transmit power is required by the TTS-WSRMax scheme to guarantee the users' achievable rates in each time slot as only sum-rate maximization is considered while the user fairness is not taken into consideration. In contrast, by considering the achievable long-term  average rate constraint of each user, PDD-TJAPB is able to exploit the multiuser channel diversity more efficiently by performing ``time sharing'' based transmission scheduling for the users in different time slots. Therefore, better performance can be achieved, especially in the large $\beta$ regime. Finally, it is observed that the average transmit power of the AO scheme first decreases and then increases with the increasing of $\beta$. This is a joint effect of the better multiuser interference suppression capability of the AO scheme (as compared to the S-CSI-based schemes) and decreasing spatial multiplexing gain of the AP-IRS-user channel. Besides, due to the multiuser diversity gain, the performance of PDD-TJAPB is even better than that of AO when $\beta$ is larger than about $0.55$. This illustrates the advantage of employing long-term rate constraints over short-term rate constraints in exploiting the multiuser channel diversity.

\subsection{Impact of Number of Reflecting Elements, $N$}
Then, in Fig. \ref{fig:pow_IRS_number}, we investigate the average transmit power at the AP versus the number of reflecting elements at the IRS, $N$. We observe that the performance of all schemes improves with the increasing of $N$, which is reasonable since larger $N$ leads to higher aperture gain and reflect beamforming gain of the IRS in general. Besides, we can see that PDD-TJAPB achieves the lowest average transmit power and the performance gain over the other schemes increases with $N$. This is because when $N$ increases, the proposed PDD-TJAPB is able to more effectively reconfigure the S-CSI and hence better exploit the multiuser diversity gain.

\begin{figure}[t] 
	\setlength{\abovecaptionskip}{-0.2cm} 
	\setlength{\belowcaptionskip}{-0.2cm} 
	\centering
	\scalebox{0.44}{\includegraphics{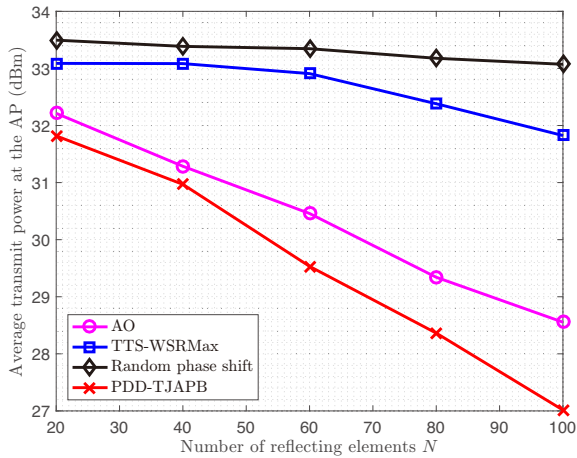}}
	\caption{Average transmit power versus number of reflecting elements, $N$.}
	\label{fig:pow_IRS_number} 
\end{figure}

\subsection{Impact of Discrete Phase Shifts}
In Fig. \ref{Fig_compareQ}, we show the impact of discrete phase shifts at the IRS on the average transmit power at the AP. In particular, we consider $Q = 1$, $Q = 2$ and $Q=3$ for discrete phase shifts at the IRS, and $Q=\infty$ denotes the (ideal) continuous phase-shift case. Other simulation parameters are the same as those in Fig. \ref{fig:pow_IRS_number}. From Fig. \ref{Fig_compareQ}, it is observed that as $Q$ increases, the performance gap between the discrete phase-shift case and the continuous phase-shift case ($Q =\infty $) gradually decreases and when $Q=3$, using IRS with discrete phase shifters incurs only negligible performance loss. Besides, it can be seen that this performance gap generally enlarges with the increasing of $N$ for given $Q$, which is reasonable since it is more difficult to obtain the performance of continuous phase shifts with quantized discrete phase shifts as $N$ becomes larger.

\begin{figure}[t] 
	\setlength{\abovecaptionskip}{-0.2cm} 
	\setlength{\belowcaptionskip}{-0.2cm} 
	\centering
	\scalebox{0.44}{\includegraphics{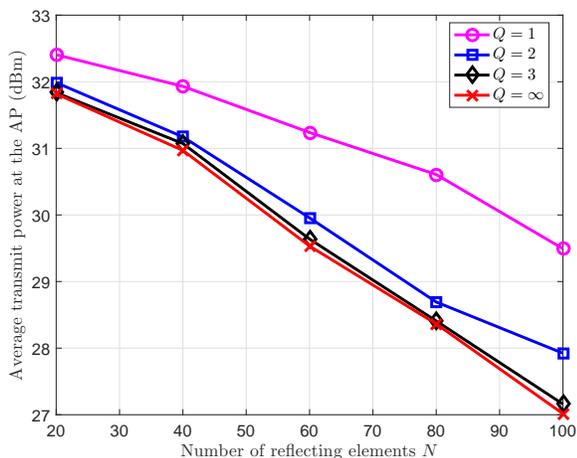}}
	\caption{Average transmit power versus  number of reflecting elements, $N$, with different values of $Q$.} 
	\label{Fig_compareQ}
\end{figure}

\subsection{Impact of CSI Delay} 
Next, in Fig. \ref{fig:CSI_delay}, we investigate the impact of CSI acquisition delay due to channel training/feedback on the average rate of the worst user and also compare the required average transmit power by different schemes, when $N=40$. Specifically, we consider time-varying channels over different time slots and thus the estimated CSI for each time slot can be outdated (although perfect channel estimation is assumed), i.e., there is generally CSI mismatch at the AP due to CSI delay caused by channel training/feedback. As such, there exists inevitable rate loss since the beamforming is designed based on the outdated CSI while the achievable rate is calculated based on the actual CSI. Similar to \cite{LiuMIMO2014}, we assume that the CSI delay is proportional to the dimension of the channel that is required at the AP, i.e., if the CSI delay for AP-user effective channel estimation (i.e., $\{ \tilde{\mathbf{h}}_k\}$) is $\omega$ millisecond (ms), then the CSI delay for estimating the full channel sample $\tilde{\mathbf{H}}$ is $\frac{{N K + M K + N M}}{{M K}} \omega$ ms. Besides, the outdated CSI is related to the actual CSI by the autoregressive model in \cite{Baddour2005} and the velocity of the users is set to $1$ kilometer per hour (km/h) as we mainly focus on the low-mobility scenario. It can be observed that the performance of all schemes degrades as the CSI delay increases. However, the performance degradation of the S-CSI-based schemes (TTS-WSRmax and PDD-TJAPB) is much smaller than the I-CSI-based scheme (AO) since the S-CSI-based schemes have smaller CSI delay. Besides, with similar average rate performance degradation, the required transmit power of PDD-TJAPB is smaller than that of TTS-WSRmax. On the other hand, though not shown in Fig. \ref{fig:CSI_delay}, we also note that the signaling overhead (for sending IRS reflection coefficients from the AP to IRS) of the S-CSI-based schemes is also much smaller than that of the I-CSI-based scheme. Therefore, the proposed scheme is more suitable for practical IRS-aided systems because it is more robust against CSI delays and has lower implementation cost.

\begin{figure}[t] 
	\setlength{\abovecaptionskip}{-0.2cm} 
	\setlength{\belowcaptionskip}{-0.2cm} 
	\centering
	\scalebox{0.44}{\includegraphics{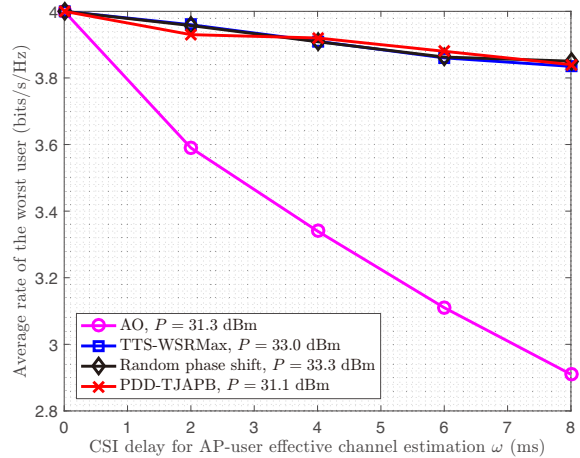}}
	\caption{Average rate of the worst user versus CSI delay for AP-user effective channel estimation, $\omega$.} 
	\label{fig:CSI_delay} 
\end{figure}

\subsection{Performance Comparison When Users Have Different Rate Targets} 
Finally, in Table \ref{tab:1}, we investigate the performance of the considered schemes when the users have different achievable average rate targets. For fair comparison, we fix the sum-rate of the users to $12$ bits/s/Hz, therefore when the users have the same rate target, we let $R_k=R=4$ bits/s/Hz; otherwise, we set $R_k=\{3,4,5\}$ bits/s/Hz. From Table \ref{tab:1}, we observe that for all the considered schemes, the required average transmit power increases when the users have different rate targets. Moreover, it is also observed that the average transmit power of PDD-TJAPB does not change much for these two cases, which shows that it is more suitable for the scenario when users have individual QoS requirements. 
 
\begin{table*}[htbp] 
\centering
\caption{Average transmit power comparison under different rate targets $\{R_k\}$} 
\label{tab:1}                        
	\begin{tabular}{ccccc} 
		\hline\noalign{\smallskip} 
		& AO & TTS-WSRMax & Random phase shift & PDD-TJAPB  \\
		\noalign{\smallskip}\hline\noalign{\smallskip}
		$R_k=R=4\;\textrm{bits/s/Hz},\forall k$ & 31.28 dBm & 33.00 dBm & 33.24 dBm & 31.16 dBm \\
		$R_k=\{3,4,5\}\;\textrm{bits/s/Hz}$ & 31.55 dBm & 33.67 dBm & 33.58 dBm & 31.33 dBm\\
		Average transmit power difference (dB) & 0.27 & 0.67 & 0.34 & 0.17 \\
		\noalign{\smallskip}\hline
	\end{tabular}
\end{table*}

\section{Conclusion}\label{section_conclusion}
In this paper, we studied the TTS beamforming optimization problem in an IRS-aided multiuser system under users' long-term average rate constraints.
The average transmit power at the AP was minimized by solving a TTS stochastic optimization problem, where the long-term IRS phase shifts are designed based on the S-CSI and the short-term active beamforming vectors are adaptively optimized according to the I-CSI of the users’ effective channels with given IRS phase shifts.
A new PDD-TJAPB algorithm was proposed by combining a novel TTS primal-dual decomposition method, the CSSCA framework and the deep unfolding technique. We proved that the proposed PDD-TJAPB algorithm is guaranteed to converge to a stationary solution of the considered problem almost surely. Compared with the existing I-CSI-based scheme, the proposed TTS and S-CSI based PDD-TJAPB not only leads to much lower signal processing complexity and channel training/signaling overhead, but also offers favorable performance under various practical setups. 

\begin{appendix}

	\subsection{Proof of Theorem \ref{theorem_2}} \label{appendix_b}
Let $\mathbf{w}^{J*} = \mathbf{w}^J(\tilde{\mathbf{h}}(\bm{\theta}^*,\tilde{\mathbf{H}}), \bm{\lambda}^*)$, we have the following equation according to the chain rule:
\begin{equation} \label{chain_rule}
\begin{aligned}
\nabla_{\bm{\lambda}}  (R_k & - r_k(\bm{\theta}^*, \mathbf{w}^{J*},\tilde{\mathbf{H}} )  ) \\
& =  \partial_{\bm{\lambda}} \mathbf{w}^{J*} \partial_{\mathbf{w}} (R_k \!-\!  r_k(\bm{\theta}, \mathbf{w}^{J*},\tilde{\mathbf{H}} ) ).
\end{aligned}
\end{equation}
Multiplying the vector inside the norm operator in \eqref{KKT_short_term} with $\partial_{\bm{\lambda}} \mathbf{w}^{J*}$, taking expectation w.r.t. $\tilde{\mathbf{H}}$, using \eqref{chain_rule} and letting $(\bm{\theta},\bm{\lambda}) = (\bm{\theta}^*,\bm{\lambda}^*)$, we have
\begin{equation} \label{KKT_short_term1}
\begin{aligned}
 \Bigg\|  \nabla_{\bm{\lambda}} & \mathbb{E}\Bigg\{ \sum\limits_{k\in \mathcal{K}}   \|\mathbf{w}^{J*} \|^2 \Bigg\}  + \sum\limits_{k \in \mathcal{K}} \lambda_k^* \nabla_{\bm{\lambda}} (R_k \\
& -\mathbb{E}\{ r_k(\bm{\theta},\mathbf{w}^{J*},\tilde{\mathbf{H}} )\} ) \Bigg\| = O(e^J(\bm{\theta}^*, \bm{\lambda}^*)).
\end{aligned}
\end{equation}
Combining \eqref{KKT_short_term1} with \eqref{KKT_2}, we obtain 
\begin{equation} \label{theorem_proof1}
\begin{aligned}
\Bigg\|  \sum\limits_{k \in \mathcal{K}} &
(\tilde{\lambda}_k - \lambda_k^*) \nabla_{\bm{\lambda}} (R_k \\
& -\mathbb{E}\{ r_k(\bm{\theta},\mathbf{w}^{J*},\tilde{\mathbf{H}} )\} ) \Bigg\| = O(e^J(\bm{\theta}^*, \bm{\lambda}^*)).
\end{aligned}
\end{equation}
Then, based on \eqref{theorem_proof1} and the LIRC, it follows that
\begin{equation} \label{theorem_proof2}
\tilde{\lambda}_k - \lambda_k^* = O(e^J(\bm{\theta}^*, \bm{\lambda}^*)), \;\forall k \in \mathcal{K}, 
\end{equation}
together with \eqref{KKT_3}, we can see that \eqref{stationary_3} is satisfied up to the error of $O(e^J(\bm{\theta}^*, \bm{\lambda}^*))$.

Similar to the derivation of \eqref{KKT_short_term1}, we have
\begin{equation} \label{theorem_proof3}
\begin{aligned}
  \Bigg\|\mathbb{E}& \left\{ \partial_{\bm{\theta}}  \mathbf{w}^{J*} \partial_{\mathbf{w}} \left(\sum\limits_{k\in \mathcal{K}}  \|\mathbf{w}^{J*}\|^2\right)\right\}  + \sum\limits_{k \in \mathcal{K}}
\lambda_k^* \mathbb{E} \{ \partial_{\bm{\theta}} \mathbf{w}^{J*} \\
& \times \partial_{\mathbf{w}}(R_k - r_k(\bm{\theta},\mathbf{w}^{J*},\tilde{\mathbf{H}} ) )\} \Bigg\| = O(e^J(\bm{\theta}^*, \bm{\lambda}^*)).
\end{aligned}
\end{equation}
Next, applying the chain rule to \eqref{KKT_1}, we obtain
\begin{equation} \label{theorem_proof4}
	\begin{aligned}
&  \mathbb{E} \left\{ \partial_{\bm{\theta}} \mathbf{w}^{J*} \partial_{\mathbf{w}} \sum\limits_{k\in \mathcal{K}} \|\mathbf{w}_k^{J^*}  \|^2 \right\}   \\
 & + \sum\limits_{k \in \mathcal{K}} \tilde{\lambda}_k \partial_{\bm{\theta}} (R_k - \mathbb{E} \{ r_k(\bm{\theta}^*, \mathbf{w}^{J*} ,\tilde{\mathbf{H}} ) \} ) \\
& + \sum\limits_{k \in \mathcal{K}} \tilde{\lambda}_k  \mathbb{E} \{ \partial_{\bm{\theta}} \mathbf{w}^{J*} \partial_{\mathbf{w}} (R_k - r_k(\bm{\theta}^*, \mathbf{w}^{J*} ,\tilde{\mathbf{H}} ) \} )
=\mathbf{0}.
\end{aligned}
\end{equation}
It follows from \eqref{theorem_proof2}, \eqref{theorem_proof3} and \eqref{theorem_proof4} that
\begin{equation}
	\begin{aligned}
\Bigg\|\sum\limits_{k \in \mathcal{K}} \lambda_k^* \partial_{\bm{\theta}} (R_k  - \mathbb{E} \{ r_k(\bm{\theta}^*, \mathbf{w}^{J*} ,& \tilde{\mathbf{H}} ) \} ) \Bigg\| \\
& = O(e^J(\bm{\theta}^*,\bm{\lambda}^*)),
\end{aligned}
\end{equation}
and together with \eqref{KKT_4}, we can see that \eqref{stationary_2} is satisfied up to the error of $O(e^J(\bm{\theta}^*,\bm{\lambda}^*))$.

Finally, it follows directly from \eqref{KKT_short_term} that \eqref{stationary_1} is satisfied up to the error of $O(e^J(\bm{\theta}^*, \bm{\lambda}^*))$. Therefore, the primal-dual pair $(\bm{\theta}^*, \{\mathbf{w}^{J*},\forall \tilde{\mathbf{H}} \}, \bm{\lambda}^*)$ satisfies the stationary conditions of problem \eqref{TTS_problem_detail_equi}. This thus completes the proof.

\subsection{Proof of Lemma \ref{lemma_1}} \label{appendix_c}
First, we can see that the following conditions hold: 1) the domains of $\bm{\theta}$ and $\mathbf{w}$ are compact and convex; 2) $  r_k(\bm{\theta},\{\mathbf{w}_k(\tilde{\mathbf{h}})\},\tilde{\mathbf{H}} ) $, $\forall k \in \mathcal{K}$ are real-valued and continuously differentiable functions of $\bm{\theta}$ and $\mathbf{w}$; 3) for any $k \in \mathcal{K}$, $\tilde{\mathbf{H}} \in \Omega$, the function
$r_k(\bm{\theta},\{\mathbf{w}_k(\tilde{\mathbf{h}})\},\tilde{\mathbf{H}} )$, its derivative w.r.t. $\bm{\theta}$ and $\mathbf{w}$, and its second-order
derivative w.r.t. $\bm{\theta}$ and $\mathbf{w}$, are all uniformly bounded. Then, together with the observations at the end of Section \ref{sec_short_term}, we can see that $\mathbf{f}_{\bm{\theta},k}^t$, $\mathbf{f}_{\mathbf{w},k}^t$ and $\mathbf{f}_{\bm{\lambda},k}^t$ are bounded. Therefore, properties 1-3 in Lemma \ref{lemma_1} follow directly from the expression of $\bar{f}_k^t(\bm{\theta}, \bm{\lambda})$ in \eqref{surrogate_function_update}. Property 4 is due to the chain rule and \cite[Lemma 1]{ruszczynski1980feasible}, for which the proof is similar to that of \cite[Lemma 1]{Yang2016TSP} and thus is omitted for conciseness. Finally, property 5 is a consequence of property 4. This thus completes the proof.
	\end{appendix}
\bibliographystyle{IEEETran}
\bibliography{references}

% Generated by IEEEtran.bst, version: 1.14 (2015/08/26)
\begin{thebibliography}{10}
\providecommand{\url}[1]{#1}
\csname url@samestyle\endcsname
\providecommand{\newblock}{\relax}
\providecommand{\bibinfo}[2]{#2}
\providecommand{\BIBentrySTDinterwordspacing}{\spaceskip=0pt\relax}
\providecommand{\BIBentryALTinterwordstretchfactor}{4}
\providecommand{\BIBentryALTinterwordspacing}{\spaceskip=\fontdimen2\font plus
\BIBentryALTinterwordstretchfactor\fontdimen3\font minus
  \fontdimen4\font\relax}
\providecommand{\BIBforeignlanguage}[2]{{%
\expandafter\ifx\csname l@#1\endcsname\relax
\typeout{** WARNING: IEEEtran.bst: No hyphenation pattern has been}%
\typeout{** loaded for the language `#1'. Using the pattern for}%
\typeout{** the default language instead.}%
\else
\language=\csname l@#1\endcsname
\fi
#2}}
\providecommand{\BIBdecl}{\relax}
\BIBdecl

\bibitem{Wu2019Magazine}
Q.~{Wu} and R.~{Zhang}, ``Towards smart and reconfigurable environment:
  {I}ntelligent reflecting surface aided wireless network,'' \emph{IEEE Commun.
  Mag.}, vol.~58, no.~1, pp. 106--112, Jan. 2020.

\bibitem{Basar2019}
E.~{Basar}, M.~{Di Renzo}, J.~{De Rosny}, M.~{Debbah}, M.~{Alouini}, and
  R.~{Zhang}, ``Wireless communications through reconfigurable intelligent
  surfaces,'' \emph{IEEE Access}, vol.~7, pp. 116\,753--116\,773, Aug. 2019.

\bibitem{Huang2019}
C.~{Huang}, A.~{Zappone}, G.~C. {Alexandropoulos}, M.~{Debbah}, and C.~{Yuen},
  ``Reconfigurable intelligent surfaces for energy efficiency in wireless
  communication,'' \emph{IEEE Trans. Wireless Commun.}, vol.~18, no.~8, pp.
  4157--4170, Aug. 2019.

\bibitem{WuTutorial2020}
Q.~Wu, S.~Zhang, B.~Zheng, C.~You, and R.~Zhang, ``Intelligent reflecting
  surface aided wireless communications: {A} tutorial,'' \emph{IEEE Trans.
  Commun.}, DOI: 10.1109/TCOMM.2021.3051897, 2021.

\bibitem{Boccardi2014}
F.~{Boccardi}, R.~W. {Heath}, A.~{Lozano}, T.~L. {Marzetta}, and P.~{Popovski},
  ``Five disruptive technology directions for {5G},'' \emph{IEEE Commun. Mag.},
  vol.~52, no.~2, pp. 74--80, Feb. 2014.

\bibitem{Yang2019}
Y.~{Yang}, B.~{Zheng}, S.~{Zhang}, and R.~{Zhang}, ``Intelligent reflecting
  surface meets {OFDM:} {P}rotocol design and rate maximization,'' \emph{IEEE
  Trans. Commun.}, vol.~68, no.~7, pp. 4522--4535, Jul. 2020.

\bibitem{Cui2019}
M.~{Cui}, G.~{Zhang}, and R.~{Zhang}, ``Secure wireless communication via
  intelligent reflecting surface,'' \emph{IEEE Wireless Commun. Lett.}, vol.~8,
  no.~5, pp. 1410--1414, Oct. 2019.

\bibitem{Jiang2019}
T.~{Jiang} and Y.~{Shi}, ``Over-the-air computation via intelligent reflecting
  surfaces,'' in \emph{Proc. IEEE Global Communications Conference (GLOBECOM)},
  Dec. 2019, pp. 1--6.

\bibitem{wu2019joint}
Q.~Wu and R.~Zhang, ``Joint active and passive beamforming optimization for
  intelligent reflecting surface assisted {SWIPT} under {QoS} constraints,''
  \emph{IEEE J. Sel. Areas Commun.}, vol.~38, no.~8, pp. 1735--1748, Aug. 2020.

\bibitem{ZhangMIMO}
S.~Zhang and R.~Zhang, ``Capacity characterization for intelligent reflecting
  surface aided {MIMO} communication,'' \emph{IEEE J. Sel. Areas Commun.},
  vol.~38, no.~8, pp. 1823--1838, Aug. 2020.

\bibitem{lu2020enabling}
H.~{Lu}, Y.~{Zeng}, S.~{Jin}, and R.~{Zhang}, ``Enabling panoramic full-angle
  reflection via aerial intelligent reflecting surface,'' in \emph{IEEE
  International Conference on Communications Workshops (ICC Workshops)}, Jun.
  2020, pp. 1--6.

\bibitem{zuo2020resource}
J.~{Zuo}, Y.~{Liu}, Z.~{Qin}, and N.~{Al-Dhahir}, ``Resource allocation in
  intelligent reflecting surface assisted {NOMA} systems,'' \emph{IEEE Trans.
  Commun.}, vol.~68, no.~11, pp. 7170--7183, Nov. 2020.

\bibitem{Ding2020_IRSNOMA}
Z.~{Ding} and H.~{Vincent Poor}, ``A simple design of {IRS-NOMA}
  transmission,'' \emph{IEEE Commun. Lett.}, vol.~24, no.~5, pp. 1119--1123,
  May 2020.

\bibitem{PanJSAC2020}
C.~{Pan}, H.~{Ren}, K.~{Wang}, M.~{Elkashlan}, A.~{Nallanathan}, J.~{Wang}, and
  L.~{Hanzo}, ``Intelligent reflecting surface aided {MIMO} broadcasting for
  simultaneous wireless information and power transfer,'' \emph{IEEE J. Sel.
  Areas Commun.}, vol.~38, no.~8, pp. 1719--1734, Aug. 2020.

\bibitem{Pan2020TWC}
C.~{Pan}, H.~{Ren}, K.~{Wang}, W.~{Xu}, M.~{Elkashlan}, A.~{Nallanathan}, and
  L.~{Hanzo}, ``Multicell {MIMO} communications relying on intelligent
  reflecting surfaces,'' \emph{IEEE Trans. Wireless Commun.}, vol.~19, no.~8,
  pp. 5218--5233, Aug. 2020.

\bibitem{Atapattu2020Twoway}
S.~{Atapattu}, R.~{Fan}, P.~{Dharmawansa}, G.~{Wang}, J.~{Evans}, and T.~A.
  {Tsiftsis}, ``Reconfigurable intelligent surface assisted two-way
  communications: {Performance} analysis and optimization,'' \emph{IEEE Trans.
  Commun.}, vol.~68, no.~10, pp. 6552--6567, Oct. 2020.

\bibitem{PradhanWCL2020}
C.~{Pradhan}, A.~{Li}, L.~{Song}, B.~{Vucetic}, and Y.~{Li}, ``Hybrid precoding
  design for reconfigurable intelligent surface aided mmwave communication
  systems,'' \emph{IEEE Wireless Commun. Lett.}, vol.~9, no.~7, pp. 1041--1045,
  Jul. 2020.

\bibitem{Hu2018TSP}
S.~{Hu}, F.~{Rusek}, and O.~{Edfors}, ``Beyond massive {MIMO}: {The} potential
  of data transmission with large intelligent surfaces,'' \emph{IEEE Trans.
  Signal Process.}, vol.~66, no.~10, pp. 2746--2758, May 2018.

\bibitem{Mishra2019ICASSP}
D.~{Mishra} and H.~{Johansson}, ``Channel estimation and low-complexity
  beamforming design for passive intelligent surface assisted {MISO} wireless
  energy transfer,'' in \emph{Proc. IEEE International Conference on Acoustics,
  Speech and Signal Processing (ICASSP)}, May 2019, pp. 4659--4663.

\bibitem{zheng2019intelligent}
B.~{Zheng} and R.~{Zhang}, ``Intelligent reflecting surface-enhanced {OFDM}:
  {C}hannel estimation and reflection optimization,'' \emph{IEEE Wireless
  Commun. Lett.}, vol.~9, no.~4, pp. 518--522, Apr. 2020.

\bibitem{you2019progressive}
C.~{You}, B.~{Zheng}, and R.~{Zhang}, ``Channel estimation and passive
  beamforming for intelligent reflecting surface: {D}iscrete phase shift and
  progressive refinement,'' \emph{IEEE J. Sel. Areas Commun.}, vol.~38, no.~11,
  pp. 2604--2620, Nov. 2020.

\bibitem{wang2019channel}
Z.~{Wang}, L.~{Liu}, and S.~{Cui}, ``Channel estimation for intelligent
  reflecting surface assisted multiuser communications: {F}ramework,
  algorithms, and analysis,'' \emph{IEEE Trans. Wireless Commun.}, vol.~19,
  no.~10, pp. 6607--6620, Oct. 2020.

\bibitem{zheng2020intelligent}
B.~{Zheng}, C.~{You}, and R.~{Zhang}, ``Intelligent reflecting surface assisted
  multi-user {OFDMA}: {C}hannel estimation and training design,'' \emph{IEEE
  Trans. Wireless Commun.}, vol.~19, no.~12, pp. 8315--8329, Dec. 2020.

\bibitem{He2019_CE}
Z.~{He} and X.~{Yuan}, ``Cascaded channel estimation for large intelligent
  metasurface assisted massive {MIMO},'' \emph{IEEE Wireless Commun. Lett.},
  vol.~9, no.~2, pp. 210--214, Feb. 2020.

\bibitem{chen2019channel}
J.~Chen, Y.-C. Liang, H.~V. Cheng, and W.~Yu, ``Channel estimation for
  reconfigurable intelligent surface aided multi-user {MIMO} systems,''
  \emph{arXiv preprint arXiv:1912.03619}, 2019.

\bibitem{Wu2018_journal}
{Q. {Wu} and R. {Zhang}}, ``Intelligent reflecting surface enhanced wireless
  network via joint active and passive beamforming,'' \emph{IEEE Trans.
  Wireless Commun.}, vol.~18, no.~11, pp. 5394--5409, Nov. 2019.

\bibitem{Wu2019Discrete}
Q.~{Wu} and R.~{Zhang}, ``Beamforming optimization for wireless network aided
  by intelligent reflecting surface with discrete phase shifts,'' \emph{IEEE
  Trans. Commun.}, vol.~68, no.~3, pp. 1838--1851, Mar. 2020.

\bibitem{Guo2020TWC}
H.~{Guo}, Y.~{Liang}, J.~{Chen}, and E.~G. {Larsson}, ``Weighted sum-rate
  maximization for reconfigurable intelligent surface aided wireless
  networks,'' \emph{IEEE Trans. Wireless Commun.}, vol.~19, no.~5, pp.
  3064--3076, May 2020.

\bibitem{Di2020JSAC}
B.~{Di}, H.~{Zhang}, L.~{Song}, Y.~{Li}, Z.~{Han}, and H.~V. {Poor}, ``Hybrid
  beamforming for reconfigurable intelligent surface based multi-user
  communications: {Achievable} rates with limited discrete phase shifts,''
  \emph{IEEE J. Sel. Areas Commun.}, vol.~38, no.~8, pp. 1809--1822, Aug. 2020.

\bibitem{Zhao2020intelligent}
M.~M. Zhao, Q.~Wu, M.~J. Zhao, and R.~Zhang, ``Exploiting amplitude control in
  intelligent reflecting surface aided wireless communication with imperfect
  {CSI},'' \emph{IEEE Trans. Commun.}, DOI: 10.1109/TCOMM.2021.3064959, 2021.

\bibitem{zhao2019intelligent}
{M. M. Zhao}, Q.~Wu, {M. J. Zhao}, and R.~Zhang, ``Intelligent reflecting
  surface enhanced wireless network: {T}wo-timescale beamforming
  optimization,'' \emph{IEEE Trans. Wireless Commun.}, vol.~20, no.~1, pp.
  2--17, Jan. 2021.

\bibitem{Liu_CSSCA_2019}
A.~{Liu}, V.~K.~N. {Lau}, and B.~{Kananian}, ``Stochastic successive convex
  approximation for non-convex constrained stochastic optimization,''
  \emph{IEEE Trans. Signal Process.}, vol.~67, no.~16, pp. 4189--4203, Aug.
  2019.

\bibitem{CVX}
M.~Grant and S.~Boyd, ``{CVX}: Matlab software for disciplined convex
  programming, version 2.1,'' \url{http://cvxr.com/cvx}, Mar. 2014.

\bibitem{Zhang2018Metasurface}
{L. {Zhang} \emph{et al.}}, ``Space-time-coding digital metasurfaces,''
  \emph{Nat. Commun.}, vol.~9, no.~1, p. 4334, Oct. 2018.

\bibitem{Mestre2008}
X.~{Mestre}, ``Improved estimation of eigenvalues and eigenvectors of
  covariance matrices using their sample estimates,'' \emph{IEEE Trans. Inf.
  Theory}, vol.~54, no.~11, pp. 5113--5129, Nov. 2008.

\bibitem{Werner2008}
K.~{Werner}, M.~{Jansson}, and P.~{Stoica}, ``On estimation of covariance
  matrices with kronecker product structure,'' \emph{IEEE Trans. Signal
  Process.}, vol.~56, no.~2, pp. 478--491, Feb. 2008.

\bibitem{Liu2015TWC}
A.~{Liu} and V.~K.~N. {Lau}, ``Two-stage subspace constrained precoding in
  massive {MIMO} cellular systems,'' \emph{IEEE Trans. Wireless Commun.},
  vol.~14, no.~6, pp. 3271--3279, Jun. 2015.

\bibitem{Khojastepour2004}
M.~A. {Khojastepour} and B.~{Aazhang}, ``The capacity of average and peak power
  constrained fading channels with channel side information,'' in \emph{IEEE
  Wireless Communications and Networking Conference (WCNC)}, vol.~1, Mar. 2004,
  pp. 77--82.

\bibitem{Shi2011WMMSE}
Q.~{Shi}, M.~{Razaviyayn}, Z.~{Luo}, and C.~{He}, ``An iteratively weighted
  {MMSE} approach to distributed sum-utility maximization for a {MIMO}
  interfering broadcast channel,'' \emph{IEEE Trans. Signal Process.}, vol.~59,
  no.~9, pp. 4331--4340, Sep. 2011.

\bibitem{zhou2020framework}
G.~{Zhou}, C.~{Pan}, H.~{Ren}, K.~{Wang}, and A.~{Nallanathan}, ``A framework
  of robust transmission design for {IRS}-aided {MISO} communications with
  imperfect cascaded channels,'' \emph{IEEE Trans. Signal Process.}, vol.~68,
  pp. 5092--5106, 2020.

\bibitem{hershey2014deep}
J.~R. Hershey, J.~L. Roux, and F.~Weninger, ``Deep unfolding: {M}odel-based
  inspiration of novel deep architectures,'' \emph{arXiv preprint
  arXiv:1409.2574}, 2014.

\bibitem{rumelhart1986learning}
D.~E. Rumelhart, G.~E. Hinton, and R.~J. Williams, ``Learning representations
  by back-propagating errors,'' \emph{Nature}, vol. 323, no. 6088, pp.
  533--536, 1986.

\bibitem{bishop2006pattern}
C.~M. Bishop, \emph{Pattern recognition and machine learning}.\hskip 1em plus
  0.5em minus 0.4em\relax springer, 2006.

\bibitem{garcez2012neural}
A.~S.~d. Garcez, K.~B. Broda, and D.~M. Gabbay, \emph{Neural-symbolic learning
  systems: {F}oundations and applications}.\hskip 1em plus 0.5em minus
  0.4em\relax Springer Science \& Business Media, 2012.

\bibitem{Wang2014}
K.~{Wang}, A.~M. {So}, T.~{Chang}, W.~{Ma}, and C.~{Chi}, ``Outage constrained
  robust transmit optimization for multiuser {MISO} downlinks: {T}ractable
  approximations by conic optimization,'' \emph{IEEE Trans. Signal Process.},
  vol.~62, no.~21, pp. 5690--5705, Nov. 2014.

\bibitem{3GPPCDLD}
3GPP, ``Study on channel model for frequencies from 0.5 to 100 {GHz} ({3GPP TR}
  38.901 version 16.1.0 release 16),'' Dec. 2019. [Online]. Available:
  https://www.3gpp.org/ftp/Specs/archive/38\_series/38.901/38901-g10.zip.
  [Accessed 7 Jun. 2020].

\bibitem{Bengtsson2001}
M.~Bengtsson and B.~Ottersten, ``Optimal and suboptimal transmit beamforming,''
  \emph{Handbook of Antennas in Wireless Communications}, 2001.

\bibitem{LiuMIMO2014}
A.~{Liu} and V.~{Lau}, ``Phase only {RF} precoding for massive {MIMO} systems
  with limited {RF} chains,'' \emph{IEEE Trans. Signal Process.}, vol.~62,
  no.~17, pp. 4505--4515, Sep. 2014.

\bibitem{Baddour2005}
K.~E. {Baddour} and N.~C. {Beaulieu}, ``Autoregressive modeling for fading
  channel simulation,'' \emph{IEEE Trans. Wireless Commun.}, vol.~4, no.~4, pp.
  1650--1662, Jul. 2005.

\bibitem{ruszczynski1980feasible}
A.~Ruszczy{\'n}ski, ``Feasible direction methods for stochastic programming
  problems,'' \emph{Math. Programm.}, vol.~19, no.~1, pp. 220--229, Dec. 1980.

\bibitem{Yang2016TSP}
Y.~{Yang}, G.~{Scutari}, D.~P. {Palomar}, and M.~{Pesavento}, ``A parallel
  decomposition method for nonconvex stochastic multi-agent optimization
  problems,'' \emph{IEEE Trans. Signal Process.}, vol.~64, no.~11, pp.
  2949--2964, Jun. 2016.

\end{thebibliography}

\end{document}